\documentclass[letterpaper,11pt]{article}

\usepackage{times} 
\usepackage{fullpage} 
\usepackage{amsmath} 
\usepackage{amssymb} 
\usepackage{amsthm} 
\usepackage{bbm} 


\newcommand{\inner}[2]{\langle #1 , #2\rangle}
\newcommand{\Inner}[2]{\left\langle #1 , #2\right\rangle}
\newcommand{\defeq}{\stackrel{\smash{\text{\tiny def}}}{=}}
\newcommand{\trans}{{\scriptstyle\mathsf{T}}}

\DeclareMathOperator{\convex}{convex-hull}
\DeclareMathOperator{\spn}{span}

\newcommand{\kprod}[3]{#1_{#2\dots#3}}

\newtheorem{theorem}{Theorem}
\newtheorem{lemma}[theorem]{Lemma}
\newtheorem{prop}[theorem]{Proposition}
\newtheorem{corollary}{Corollary}[theorem]
\theoremstyle{definition}
\newtheorem{remark}[corollary]{Remark}
\newtheorem{notation}{Notation for Theorem}


\newcommand{\br}[1]{[#1]}
\newcommand{\cBr}[1]{\left[#1\right]}
\newcommand{\pa}[1]{(#1)}
\newcommand{\Pa}[1]{\left(#1\right)}
\newcommand{\set}[1]{\{#1\}}
\newcommand{\Set}[1]{\left\{#1\right\}}


\newcommand{\Jamiolkowski}{J}
\newcommand{\jam}[1]{\Jamiolkowski\pa{#1}}


\DeclareMathOperator{\trace}{Tr}
\newcommand{\ptr}[2]{\trace_{#1}\pa{#2}}
\newcommand{\Ptr}[2]{\trace_{#1}\Pa{#2}}
\newcommand{\tr}[1]{\ptr{}{#1}}


\newcommand{\tinyspace}{\mspace{1mu}}

\newcommand{\cAbs}[1]{\left|\tinyspace#1\tinyspace\right|}
\newcommand{\norm}[1]{\lVert\tinyspace#1\tinyspace\rVert}
\newcommand{\Norm}[1]{\left\lVert\tinyspace#1\tinyspace\right\rVert}
\newcommand{\fnorm}[1]{\norm{#1}_{\mathrm{F}}}
\newcommand{\Fnorm}[1]{\Norm{#1}_{\mathrm{F}}}
\newcommand{\tnorm}[1]{\norm{#1}_{\trace}}


\newcommand{\fontmapset}{\mathbf} 

\newcommand{\mset}[2]{\fontmapset{#1}\pa{#2}}

\newcommand{\lin}[1]{\mset{L}{#1}}

\newcommand{\her}[1]{\mset{H}{#1}}

\newcommand{\pos}[1]{\mset{H^+}{#1}}


\newcommand{\identity}{\mathbbm{1}}
\newcommand{\idsup}[1]{\identity_{#1}}

\def\noisy{\Delta}

\def\ot{\otimes}


\def\cA{\mathcal{A}}

\def\cE{\mathcal{E}}

\def\cV{\mathcal{V}}

\def\cX{\mathcal{X}}
\def\cY{\mathcal{Y}}
\def\cZ{\mathcal{Z}}

\def\bK{\mathbf{K}}

\def\bQ{\mathbf{Q}}

\def\bS{\mathbf{S}}


\begin{document}

\title{Properties of Local Quantum Operations with Shared Entanglement}

\author{
  Gus Gutoski\thanks{
    Institute for Quantum Computing
    and School of Computer Science,
    University of Waterloo,
    Waterloo, Ontario, Canada.
  }
}

\date{April 7, 2009}

\maketitle

\begin{abstract}

Multi-party local quantum operations with shared quantum entanglement or shared
classical randomness are studied.
The following facts are established:
\begin{itemize}

\item
There is a ball
of local operations with shared randomness lying within the space
spanned by the no-signaling operations and centred at the completely noisy
channel.

\item
The existence of the ball of local operations with shared randomness is
employed to prove that the weak membership problem for local operations
with shared entanglement is strongly NP-hard.

\item
Local operations with shared entanglement are
characterized in terms of linear functionals that are
``completely'' positive on a certain cone $K$ of separable Hermitian operators,
under a natural notion of complete positivity appropriate to that cone.
Local operations with shared randomness
(but not entanglement) are also characterized in terms of linear functionals
that are merely positive on that same cone $K$.

\item
Existing characterizations of
no-signaling operations are generalized
to the multi-party setting
and recast in terms of the Choi-Jamio\l kowski
representation for quantum super-operators.
%
It is noted that the standard nonlocal box is an example of a no-signaling
operation that is separable, yet cannot be implemented by local operations with shared
entanglement.
%

\end{itemize}

\end{abstract}

\section{Introduction}


A quantum operation that maps a finite-dimensional bipartite quantum input system
$\cX_1\ot\cX_2$ to another finite-dimensional bipartite quantum output system
$\cA_1\ot\cA_2$ is a
\emph{local operation with shared entanglement (LOSE)}
if it can be approximated to arbitrary precision by quantum operations
$\Lambda$ for which there exists a fixed quantum state $\sigma$
and quantum operations $\Psi_1,\Psi_2$
such that the action of $\Lambda$ on each input state $\rho$ is given by
\[ \Lambda\pa{\rho} = \Pa{ \Psi_1\ot\Psi_2 }\pa{\rho\ot\sigma}. \]
Specifically, $\sigma$ is a state of some finite-dimensional bipartite
quantum auxiliary system $\cE_1\ot\cE_2$ and the quantum operations $\Psi_i$ map
the system $\cX_i\ot\cE_i$ to the system $\cA_i$ for each $i=1,2$.
This arrangement is illustrated in Figure \ref{fig:LO}.

\begin{figure}[h]
  \hrulefill \vspace{2mm}
  \begin{center}
\setlength{\unitlength}{2072sp}%
\begin{picture}(4140,4617)(846,-4123)
\thinlines
\put(2881,-1051){\line( 1, 0){360}}
\put(2881,-1142){\line( 1, 0){360}}
\put(2881,-1231){\line( 1, 0){360}}
\put(2881,-1322){\line( 1, 0){360}}
\put(2881,-1411){\line( 1, 0){360}}
\put(2881,-961){\line( 1, 0){360}}
\put(2881,-871){\line( 1, 0){360}}
\put(2881,-1501){\line( 1, 0){360}}
\put(2881,-1591){\line( 1, 0){360}}
\put(2881,-1681){\line( 1, 0){360}}
\put(2881,-2401){\line( 1, 0){360}}
\put(2881,-2492){\line( 1, 0){360}}
\put(2881,-2581){\line( 1, 0){360}}
\put(2881,-2672){\line( 1, 0){360}}
\put(2881,-2761){\line( 1, 0){360}}
\put(2881,-2311){\line( 1, 0){360}}
\put(2881,-2221){\line( 1, 0){360}}
\put(2881,-2851){\line( 1, 0){360}}
\put(2881,-2941){\line( 1, 0){360}}
\put(2881,-3031){\line( 1, 0){360}}
\put(1341,-3391){\line( 1, 0){1900}}
\put(1341,-3481){\line( 1, 0){1900}}
\put(1341,-3571){\line( 1, 0){1900}}
\put(1341,-3661){\line( 1, 0){1900}}
\put(1341,-3751){\line( 1, 0){1900}}
\put(1341,-3301){\line( 1, 0){1900}}
\put(1341,-3211){\line( 1, 0){1900}}
\put(4141,-2941){\line( 1, 0){540}}
\put(4141,-3031){\line( 1, 0){540}}
\put(4141,-3121){\line( 1, 0){540}}
\put(4141,-3211){\line( 1, 0){540}}
\put(4141,-3301){\line( 1, 0){540}}
\put(4141,-2851){\line( 1, 0){540}}
\put(4141,-2761){\line( 1, 0){540}}
\put(4141,-2671){\line( 1, 0){540}}
\put(4141,-2581){\line( 1, 0){540}}
\put(4141,-3391){\line( 1, 0){540}}
\put(4141,-871){\line( 1, 0){540}}
\put(4141,-961){\line( 1, 0){540}}
\put(4141,-1051){\line( 1, 0){540}}
\put(4141,-1141){\line( 1, 0){540}}
\put(4141,-1231){\line( 1, 0){540}}
\put(4141,-781){\line( 1, 0){540}}
\put(4141,-691){\line( 1, 0){540}}
\put(4141,-601){\line( 1, 0){540}}
\put(4141,-511){\line( 1, 0){540}}
\put(4141,-1321){\line( 1, 0){540}}
\put(1341,-331){\line( 1, 0){1900}}
\put(1341,-421){\line( 1, 0){1900}}
\put(1341,-511){\line( 1, 0){1900}}
\put(1341,-601){\line( 1, 0){1900}}
\put(1341,-691){\line( 1, 0){1900}}
\put(1341,-241){\line( 1, 0){1900}}
\put(1341,-151){\line( 1, 0){1900}}
\put(3241,-3931){\framebox(900,1890){$\Psi_2$}}
\put(1701,-4111){\dashbox{57}(2620,4320){}}
\put(3241,-1861){\framebox(900,1890){$\Psi_1$}}
\put(2501,-1446){\makebox(0,0)[lb]{$\cE_1$}}
\put(2501,-2746){\makebox(0,0)[lb]{$\cE_2$}}
\put(4771,-1086){\makebox(0,0)[lb]{}}
\put(4771,-3156){\makebox(0,0)[lb]{}}
\put(911,-586){\makebox(0,0)[lb]{$\cX_1$}}
\put(911,-3646){\makebox(0,0)[lb]{$\cX_2$}}
\put(4800,-1086){\makebox(0,0)[lb]{$\cA_1$}}
\put(4800,-3146){\makebox(0,0)[lb]{$\cA_2$}}
\put(2871,389){\makebox(0,0)[lb]{$\Lambda$}}
\put(1911,-2060){$\sigma\left\{\rule[-11.25mm]{0mm}{0mm}\right.$}
\end{picture}%
  \end{center}
    \caption{A two-party local quantum operation $\Lambda$
    with shared entanglement or randomness.
    The local operations are represented by $\Psi_1,\Psi_2$;
    the shared entanglement or randomness by $\sigma$.}
    \label{fig:LO}
  \hrulefill
\end{figure}
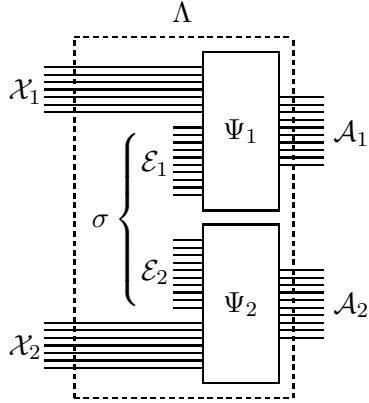


Intuitively, the two operations $\Psi_1,\Psi_2$ represent the actions
of two distinct parties who cannot communicate once they receive their portions
of the input state $\rho$.
The parties can, however, arrange ahead of time to share distinct portions of
some distinguished auxiliary state $\sigma$.
This state---and any entanglement contained therein---may be used by the
parties to correlate their separate operations on $\rho$.
This notion is easily generalized to a multi-party setting and the results presented herein extend to this generalized setting with minimal complication.

LOSE operations
are of particular interest in the quantum information community in part because
any physical operation jointly implemented by spatially separated parties who
obey both quantum mechanics and relativistic causality must necessarily be of
this form.
Moreover, it is notoriously difficult to say anything meaningful about
this class of operations, despite its simple definition.

One source of such difficulty stems from the fact that
there exist two-party LOSE operations with the fascinating property that they
cannot be implemented with any finite amount of shared
entanglement \cite{LeungT+08}.
(Hence the allowance for arbitrarily close approximations in the preceding
definition of LOSE operations.)
This difficulty has manifested itself quite prominently in the study of
two-player co-operative games:
classical games characterize NP,\footnote{
As noted in Ref.~\cite{KempeK+07}, this characterization follows from the
PCP Theorem \cite{AroraL+98,AroraS98}.
}
%
%
%
%
whereas quantum games with shared entanglement are not even known to be
computable. 
Within the context of these games, interest has focussed largely on special
cases of LOSE operations
\cite{KobayashiM03,CleveH+04,Wehner06,CleveS+07,CleveGJ07,KempeR+07},
but progress has been made recently in the general case
\cite{KempeK+07,KempeK+08,LeungT+08,DohertyL+08,NavascuesP+08}.
In the physics literature, LOSE operations are often discussed in the context of
\emph{no-signaling} operations
\cite{BeckmanG+01,EggelingSW02,PianiH+06}.

In the present paper some light is shed on the general class of multi-party LOSE operations,
as well as sub-class of operations derived therefrom as follows:
if the quantum state $\sigma$ shared among the different parties of a
LOSE operation $\Lambda$ is \emph{separable} then it is easy to see that
the parties could implement $\Lambda$ using only shared \emph{classical randomness}.
Quantum operations in this sub-class are therefore called
\emph{local operations with shared randomness (LOSR)}.

Several distinct results are established in the present paper,
many of which mirror some existing result pertaining to separable quantum
states.
What follows is a brief description of each result together with its analogue
from the literature on separable states where appropriate.

\subsubsection*{Ball around the identity}

\begin{description}

\item \emph{Prior work on separable quantum states.}
If $A$ is a Hermitian operator
acting on a $d$-dimensional bipartite space
and whose Frobenius norm at most 1
then the perturbation $I\pm A$ of the identity
represents an (unnormalized) bipartite separable quantum state.
In other words, there is a ball of (normalized) bipartite separable
states with radius $\frac{1}{d}$ in Frobenius norm centred at
the completely mixed state $\frac{1}{d}I$
\cite{Gurvits02,GurvitsB02}.

A similar ball exists in the multipartite case, but with smaller radius.
In particular, there is a ball of $m$-partite separable $d$-dimensional states
with radius $\Omega\Pa{2^{-m/2}d^{-1}}$ in Frobenius norm
centred at the completely mixed state
\cite{GurvitsB03}.
Subsequent results offer some improvements on this radius
\cite{Szarek05,GurvitsB05,Hildebrand05}.

\item \emph{Present work on local quantum operations.}
Analogous results are proven in Sections \ref{sec:gen} and \ref{sec:balls}
for multi-party LOSE and LOSR operations.
Specifically, if $A$ is a Hermitian operator
acting on a $n$-dimensional $m$-partite space
and whose Frobenius norm scales as
$O\Pa{2^{-m}n^{-3/2}}$ then $I\pm A$ is the
Choi-Jamio\l kowski representation of an (unnormalized) $m$-party LOSR operation.
As the unnormalized completely noisy channel
\[\noisy:X\mapsto\tr{X}I\]
is the unique quantum operation whose Choi-Jamio\l kowski representation equals the
identity, it follows that there is a ball of $m$-party LOSR operations
(and hence also of LOSE operations)
with radius $\Omega\Pa{2^{-m}n^{-3/2}d^{-1}}$ in Frobenius norm
centred at the completely noisy channel $\frac{1}{d}\noisy$.
(Here the normalization factor $d$ is the dimension of the output system.)

The perturbation $A$ must lie in the space spanned by
Choi-Jamio\l kowski representations of the no-signaling operations.
Conceptual implications of this technicality are discussed in
Remark \ref{rem:balls:noise} of
Section \ref{sec:balls}.
No-signaling operations are discussed in the present paper,
as summarized below.

\item \emph{Comparison of proof techniques.}
Existence of this ball of LOSR operations is established via elementary linear
algebra.
By contrast, existence of the ball of separable states was originally
established via a delicate combination of
(i)
the fundamental characterizations of separable states in terms of positive
super-operators (described in more detail below), together with
(ii)
nontrivial norm inequalities for these super-operators.

Moreover, the techniques presented herein
for LOSR operations
are of sufficient generality to immediately imply a ball of separable states
without the need for the aforementioned characterizations or
their accompanying norm inequalities.
This simplification comes in spite of the inherently more complicated nature of
LOSR operations as compared to separable states.
It should be noted, however, that the ball of separable states implied by the
present work is smaller than the ball established in prior work by a factor of
$2^{-m/2}d^{-3/2}$.


\end{description}

\subsubsection*{Weak membership problems are NP-hard}

\begin{description}

\item \emph{Prior work on separable quantum states.}
The weak membership problem asks,
\begin{quote}
  ``Given a description of a quantum state $\rho$
  and an accuracy parameter $\varepsilon$,
  is $\rho$ within distance $\varepsilon$ of a separable state?''
\end{quote}
This problem was proven strongly NP-complete under oracle (Cook) reductions by
Gharibian \cite{Gharibian08},
who built upon the work of Gurvits \cite{Gurvits02} and Liu \cite{Liu07}.
In this context, ``strongly NP-complete'' means that the problem remains
NP-complete even when the accuracy parameter $\varepsilon=1/s$ is given in
unary as $1^s$.

NP-completeness of the weak membership problem was originally established by Gurvits \cite{Gurvits02}.
The proof consists of a NP-completeness result for the weak \emph{validity} problem---a decision version of linear optimization over separable states---followed by an application of the 
Yudin-Nemirovski\u\i{} Theorem~\cite{YudinN76, GrotschelL+88}, which provides an oracle-polynomial-time reduction from weak validity to weak membership for general convex sets.

As a precondition of the Theorem, the convex set in question---in this case, the
set of separable quantum states---must contain a reasonably-sized ball.
In particular, this NP-completeness result relies crucially upon the existence
of the aforementioned ball of separable quantum states.

For \emph{strong} NP-completeness, it is necessary to employ a specialized 
``approximate'' version of the Yudin-Nemirovski\u\i{} Theorem
due to Liu~\cite[Theorem 2.3]{Liu07}.

\item \emph{Present work on local quantum operations.}
In Section \ref{sec:hard} it is proved that the weak membership problems for
LOSE and LOSR operations are both strongly NP-hard under oracle reductions.
The result for LOSR operations follows trivially from Gharibian
(just take the input spaces to be empty),
but it is unclear how to obtain the result for LOSE operations without invoking
the contributions of the present paper.

The proof begins by observing that the weak validity problem for LOSE
operations is merely a two-player quantum game in disguise and hence is
strongly NP-hard \cite{KempeK+07}.
The hardness result for the weak \emph{membership} problem is then obtained via
a Gurvits-Gharibian-style application of Liu's version of the
Yudin-Nemirovski\u\i{} Theorem, which of course depends upon the existence of
the ball revealed in Section \ref{sec:balls}.

\end{description}

\subsubsection*{Characterization in terms of positive super-operators}

\begin{description}

\item \emph{Prior work on separable quantum states.}
A quantum state $\rho$ of a bipartite system $\cX_1\ot\cX_2$
is separable if and only if the operator
\[ \Pa{\Phi\otimes\identity_{\cX_2}}\pa{\rho} \]
is positive semidefinite whenever the super-operator
$\Phi$ is positive.
This fundamental fact was first proven in 1996 by
Horodecki \emph{et al.}~\cite{HorodeckiH+96}.

The multipartite case reduces inductively to the bipartite case:
the state $\rho$ of an $m$-partite system
is separable if and only if
\( \Pa{\Phi\otimes\identity}\pa{\rho} \)
is positive semidefinite whenever the super-operator
$\Phi$ is positive on
$(m-1)$-partite separable operators \cite{HorodeckiH+01}.

\item \emph{Present work on local quantum operations.}
In Section \ref{sec:char} it is proved that a multi-party quantum operation
$\Lambda$ is a LOSE operation if and only if
\[ \varphi\pa{\jam{\Lambda}}\geq 0 \]
whenever the linear functional $\varphi$ is
``completely'' positive on a certain cone of separable Hermitian operators,
under a natural notion of complete positivity appropriate to that cone.
A characterization of LOSR operations is obtained by replacing
\emph{complete} positivity of $\varphi$ with mere positivity
on that same cone.
Here $\jam{\Lambda}$ denotes the Choi-Jamio\l kowski representation of the
super-operator $\Lambda$.

The characterizations presented in Section \ref{sec:char} do not rely
upon discussion in previous sections.
This independence contrasts favourably with prior work on separable quantum
states, wherein the existence of the ball around the completely mixed state
(and the subsequent NP-hardness result)
relied crucially upon the characterization of separable states in terms of
positive super-operators.

\end{description}

\subsubsection*{No-signaling operations}

A quantum operation is \emph{no-signaling} if it cannot be used by spatially
separated parties to violate relativistic causality.
By definition, every LOSE operation is a no-signaling operation.
Moreover, there exist no-signaling operations that are not LOSE operations.
Indeed, it is noted in Section \ref{sec:no-sig} that the standard nonlocal
box of Popescu and Rohrlich \cite{PopescuR94}
is an example of a no-signaling operation that is separable
(in the sense of Rains \cite{Rains97}),
yet it is not a LOSE operation.

Two characterizations of no-signaling operations
are also discussed in Section \ref{sec:no-sig}.
These characterizations were first established somewhat implicitly for the
bipartite case in Beckman \emph{et al.}~\cite{BeckmanG+01}.
The present paper generalizes these characterizations to the multi-party
setting and recasts them in terms of the Choi-Jamio\l kowski
representation for quantum super-operators.

\subsubsection*{Open problems}

Several interesting open problems are
brought to light by the present work.
These problems are discussed in Section \ref{sec:conclusion}.

\section{General Results on Separable Operators} 
\label{sec:gen}

It is not difficult to see that a quantum operation is a LOSR
operation if and only if it can be written as a convex combination of product
quantum operations of the form $\Phi_1\otimes\Phi_2$.
(This observation is proven in Proposition \ref{prop:LOSR-convex} of Section \ref{sec:balls:defs}.)
In order to be quantum operations, the super-operators $\Phi_1,\Phi_2$
must be completely positive and trace-preserving.
In terms of Choi-Jamio\l kowski representations, complete positivity holds if
and only if the operators $\jam{\Phi_1}$, $\jam{\Phi_2}$ are positive
semidefinite.
The trace-preserving condition is characterized by simple linear
constraints on $\jam{\Phi_1}$, $\jam{\Phi_2}$, the details of which are
deferred until Section \ref{sec:balls}.
It suffices for now to observe that these constraints induce
subspaces $\bQ_1$, $\bQ_2$ of Hermitian operators within which
$\jam{\Phi_1}$, $\jam{\Phi_2}$ must lie.
In other words, the quantum operation $\Lambda$
is a LOSR operation if and only if
$\jam{\Lambda}$ lies in the set
\[
  \convex \Set{
    X\otimes Y : X\in\bQ_1, Y\in\bQ_2,
    \textrm{ and $X,Y$ are positive semidefinite}
  }.
\]
As the unnormalized completely noisy channel $\noisy$ has $\jam{\noisy}=I$,
a ball of LOSR operations around this
channel may be established by showing that every operator in the product space
$\bQ_1\otimes\bQ_2$ and
close enough to the identity lies in the above set.
Such a fact is established in the present section.
Indeed, this fact is shown to hold not only for the specific choice
$\bQ_1,\bQ_2$ of subspaces, but for \emph{every} choice
$\bS_1,\bS_2$ of subspaces that contain the identity.

Choosing $\bS_1$ and $\bS_2$ to be full spaces of Hermitian operators yields
an alternate (and simpler) proof of the existence of a ball of separable quantum
states surrounding the completely mixed state and, consequently, of the
NP-completeness of separability testing.
However, the ball of separable states implied by the present work is not as
large as that exhibited by Gurvits and Barnum \cite{GurvitsB02, GurvitsB03}.

Due to the general nature of this result, discussion in this section is
abstract---applications to quantum information are deferred until
Section \ref{sec:balls}.
The results presented herein were inspired by Chapter 2 of
Bhatia~\cite{Bhatia07}.

\subsection{Linear Algebra: Review and Notation}

For an arbitrary operator $X$ the standard operator norm of $X$ is denoted
$\norm{X}$
and the standard conjugate-transpose (or \emph{Hermitian conjugate}) is denoted
$X^*$.
For operators $X$ and $Y$, the standard Hilbert-Schmidt inner product is given
by \[\inner{X}{Y}\defeq\ptr{}{X^*Y}.\]
The standard Euclidean 2-norm for vectors is also a norm on operators.
It is known in this context as the \emph{Frobenius norm} and is given by
\[ \fnorm{X}\defeq\sqrt{\inner{X}{X}}. \]
Discussion in the present paper often involves finite sequences of operators,
spaces, and so on.
As such, it is convenient to adopt a shorthand notation for the Kronecker
product of such a sequence.
For example, if $\bS_1,\dots,\bS_m$ are spaces of operators then
\[
  \kprod{\bS}{i}{j} \defeq
  \bS_i \otimes \cdots \otimes \bS_j
\]
denotes their Kronecker product
for integers $1\leq i\leq j\leq m$.
A similar notation applies to operators.

For any space $\bS$ of operators, $\bS^+$ denotes the intersection of
$\bS$ with the positive semidefinite cone.
An element $X$ of the product space
$\kprod{\bS}{1}{m}$
is said to be
\emph{$(\bS_1;\dots;\bS_m)$-separable} if $X$ can be written as a
convex combination of
product operators of the form
$P_1\otimes\cdots\otimes P_m$ where each $P_i\in\bS_i^+$
is positive semidefinite.
The set of $(\bS_1;\dots;\bS_m)$-separable operators forms a cone inside
$\Pa{\kprod{\bS}{1}{m}}^+$.
By convention, the use of bold font is reserved for sets of operators.

\subsection{Hermitian Subspaces Generated by Separable Cones}
\label{sec:gen:sep}

Let $\bS$ be any subspace of Hermitian operators that contains the identity.
The cone $\bS^+$ always \emph{generates} $\bS$, meaning that each element of
$\bS$ may be written as a difference of two elements of $\bS^+$.
As proof, choose any $X\in\bS$ and let
\[ X^\pm = \frac{ \norm{X} I \pm X }{2}. \]
It is clear that $X=X^+ - X^-$ and that $X^\pm\in\bS^+$
(using the fact that $I\in\bS$.)
Moreover, it holds that $\norm{X^\pm}\leq\norm{X}$ for this particular choice of
$X^\pm$.

In light of this observation, one might wonder whether it could be extended in
product spaces to separable operators.
In particular, do the
$(\bS_1;\dots;\bS_m)$-separable operators generate the product space
$\kprod{\bS}{1}{m}$?
If so, can elements of $\kprod{\bS}{1}{m}$ be generated by
$(\bS_1;\dots;\bS_m)$-separable operators with bounded norm?
The following theorem answers these two questions in the affirmative.

\begin{theorem} \label{thm:sep-bound}

  Let $\bS_1,\dots,\bS_m$ be subspaces of
  Hermitian operators---all of which contain the identity---and let
  $n=\dim\pa{\kprod{\bS}{1}{m}}$.
  Then every element $X\in\kprod{\bS}{1}{m}$ may be written
  $X = X^+ - X^-$ where $X^\pm$ are $(\bS_1;\dots;\bS_m)$-separable
  with $\norm{X^\pm}\leq 2^{m-1}\sqrt{n}\fnorm{X}$.

\end{theorem}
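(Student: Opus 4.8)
The plan is to induct on the number of parties $m$. The base case $m=1$ is essentially the observation already recorded before the statement: for $X\in\bS_1$ the operators $X^\pm=\frac{\norm{X}I\pm X}{2}$ lie in $\bS_1^+$ and satisfy $\norm{X^\pm}\le\norm{X}\le\fnorm{X}\le\sqrt{n}\fnorm{X}$, which is precisely the claim with prefactor $2^{m-1}=1$.

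For the inductive step I would write $\kprod{\bS}{1}{m}=\bS_1\ot\kprod{\bS}{2}{m}$ and set $n_1=\dim(\bS_1)$ and $n'=\dim(\kprod{\bS}{2}{m})$, so that $n=n_1 n'$. Fix a Hermitian basis $E_1,\dots,E_{n_1}$ of $\bS_1$ that is orthonormal in the Hilbert-Schmidt inner product, and expand $X=\sum_a E_a\ot Y_a$, where the coefficient operators $Y_a$ lie in $\kprod{\bS}{2}{m}$ (each $Y_a$ is the contraction of $X$ against $E_a$ in the first tensor slot). The key is that each tensor factor can now be split independently: apply the base-case decomposition $E_a=E_a^+-E_a^-$ in the first slot, with $E_a^\pm\in\bS_1^+$ and $\norm{E_a^\pm}\le\norm{E_a}$, and the induction hypothesis $Y_a=Y_a^+-Y_a^-$ in the remaining slots, with $Y_a^\pm$ being $(\bS_2;\dots;\bS_m)$-separable and $\norm{Y_a^\pm}\le 2^{m-2}\sqrt{n'}\fnorm{Y_a}$.

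Expanding each product $E_a\ot Y_a=(E_a^+-E_a^-)\ot(Y_a^+-Y_a^-)$ produces four terms of the form $E_a^\pm\ot Y_a^\pm$. Since $E_a^\pm\in\bS_1^+$ and $Y_a^\pm$ is a nonnegative combination of products $P_2\ot\cdots\ot P_m$ with $P_i\in\bS_i^+$, every such term is itself $(\bS_1;\dots;\bS_m)$-separable. Collecting the like-signed contributions into $X^+=\sum_a\pa{E_a^+\ot Y_a^+ + E_a^-\ot Y_a^-}$ and $X^-=\sum_a\pa{E_a^+\ot Y_a^- + E_a^-\ot Y_a^+}$ then yields a decomposition $X=X^+-X^-$ with $X^\pm$ separable, using that the separable operators form a cone (so sums of separable operators are separable).

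It remains to bound the operator norm. By the triangle inequality and $\norm{A\ot B}=\norm{A}\norm{B}$, each $\norm{X^\pm}$ is at most $\sum_a 2\norm{E_a}\cdot 2^{m-2}\sqrt{n'}\fnorm{Y_a}=2^{m-1}\sqrt{n'}\sum_a\norm{E_a}\fnorm{Y_a}$. The step that makes the constants come out exactly right—and the one I would treat most carefully—is converting $\sum_a\norm{E_a}\fnorm{Y_a}$ into $\sqrt{n_1}\fnorm{X}$: orthonormality gives $\norm{E_a}\le\fnorm{E_a}=1$, Parseval gives $\fnorm{X}^2=\sum_a\fnorm{Y_a}^2$, and Cauchy-Schwarz over the $n_1$ indices gives $\sum_a\fnorm{Y_a}\le\sqrt{n_1}\fnorm{X}$. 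Combining, $\norm{X^\pm}\le 2^{m-1}\sqrt{n_1}\sqrt{n'}\fnorm{X}=2^{m-1}\sqrt{n}\fnorm{X}$, completing the induction. The only genuine obstacle is the norm bookkeeping: one factor of $2$ per party arises from splitting the product and keeping two of the four terms in each of $X^\pm$, and one factor of $\sqrt{\dim(\bS_i)}$ per party arises from the operator-versus-Frobenius norm mismatch that Cauchy-Schwarz introduces.
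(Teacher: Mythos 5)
Your proof is correct and follows essentially the same route as the paper: induction on $m$, the splitting $\frac{\norm{X}I\pm X}{2}$ as the base case, expansion in a Hilbert--Schmidt orthonormal basis, and a Cauchy--Schwarz ($\ell_1$ versus $\ell_2$) step to convert the coefficient sum into $\sqrt{n}\fnorm{X}$. The only difference is organizational---the paper first builds a full product basis whose elements split with operator norm at most $2^{m-1}$ and then expands $X$ once, whereas you interleave the expansion with the induction one tensor factor at a time---and the constants accumulate identically.
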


\begin{proof}

First, it is proven that there is an orthonormal basis $\mathbf{B}$ of
$\kprod{\bS}{1}{m}$ with the property that every element $E\in \mathbf{B}$
may be written $E = E^+ - E^-$ where $E^\pm$ are
$(\bS_1;\dots;\bS_m)$-separable
with $\norm{E^\pm}\leq 2^{m-1}$.
The proof is by straightforward induction on $m$.
The base case $m=1$ follows immediately from the earlier observation that
every element $X\in\bS_1$ is generated by some $X^\pm\in\bS_1^+$ with
$\norm{X^\pm}\leq \norm{X}$.
In particular, any element $E=E^+-E^-$ of any orthonormal basis of $\bS_1$ has
$\norm{E^\pm}\leq\norm{E}\leq\fnorm{E}=1=2^0$.

In the general case, the induction hypothesis states that there is
an orthonormal basis $\mathbf{B}'$ of $\kprod{\bS}{1}{m}$ with the desired
property.
Let $\mathbf{B}_{m+1}$ be any orthonormal basis of $\bS_{m+1}$.
As in the base case, each $F\in \mathbf{B}_{m+1}$ is generated by some
$F^\pm\in\bS_{m+1}^+$ with $\norm{F^\pm}\leq \norm{F} \leq 1$.
Define the orthonormal basis
$\mathbf{B}$ of $\kprod{\bS}{1}{m+1}$ to consist of all product operators
of the form $E\otimes F$ for $E\in \mathbf{B}'$ and $F\in \mathbf{B}_{m+1}$.
Define
\begin{align*}
  K^+ &\defeq \cBr{E^+\otimes F^+} + \cBr{E^-\otimes F^-},\\
  K^- &\defeq \cBr{E^+\otimes F^-} + \cBr{E^-\otimes F^+}.
\end{align*}
It is clear that $E\otimes F=K^+-K^-$ and $K^\pm$ are
$(\bS_1;\dots;\bS_{m+1})$-separable.
Moreover,
\[
  \Norm{K^+} \leq \Norm{E^+\otimes F^+} + \Norm{E^-\otimes F^-} \leq
  \cBr{2^{m-1}\times 1} + \cBr{2^{m-1}\times 1} = 2^m.
\]
A similar computation yields $\Norm{K^-}\leq 2^m$, which establishes the
induction.

Now, let $X\in\kprod{\bS}{1}{m}$ and let $x_j\in\mathbb{R}$ be the unique
coefficients of $X$ in the aforementioned orthonormal basis
$\mathbf{B}=\set{E_1,\dots,E_n}$.
Define
\begin{align*}
  X^+ &\defeq \sum_{j\::\:x_j>0} x_j E_j^+ - \sum_{j\::\:x_j<0} x_j E_j^-,\\
  X^- &\defeq \sum_{j\::\:x_j>0} x_j E_j^- - \sum_{j\::\:x_j<0} x_j E_j^+.
\end{align*}
It is clear that $X=X^+-X^-$ and $X^\pm$ are
$(\bS_1;\dots;\bS_m)$-separable.
Employing the triangle inequality and the vector norm inequality
$\norm{x}_1\leq\sqrt{n}\norm{x}_2$, it follows that
\[
  \Norm{X^+} \leq 2^{m-1} \sum_{j=1}^n \cAbs{x_j} \leq
  2^{m-1} \sqrt{n} \sqrt{ \sum_{j=1}^n \cAbs{x_j}^2 } =
  2^{m-1} \sqrt{n} \Fnorm{X}.
\]
A similar computation yields $\Norm{X^-}\leq 2^{m-1}\sqrt{n}\fnorm{X}$,
which completes the proof.
\end{proof}

\subsection{Separable Operators as a Subtraction from the Identity}

At the beginning of Section \ref{sec:gen:sep} it was observed that
$\norm{X}I-X$ lies in $\bS^+$ for all $X\in\bS$.
One might wonder whether more could be expected of $\norm{X}I-X$ than mere
positive semidefiniteness.
For example, under what conditions is $\norm{X}I-X$ a
$\pa{\bS_1;\dots;\bS_m}$-separable operator?
The following theorem provides three such conditions.
Moreover, the central claim of this section is established by this theorem.

\begin{theorem} \label{thm:identity-sep}

Let $\bS_1,\dots,\bS_m$ be subspaces of Hermitian operators---all of which
contain the identity---and let $n=\dim\pa{\kprod{\bS}{1}{m}}$.
The following hold:
\begin{enumerate}

\item \label{item:identity-sep:1}
  \( \Norm{P}I - P \)
  is $\pa{\bS_1;\dots;\bS_m}$-separable whenever $P$ is a product operator
  of the form $P=P_1\otimes\cdots\otimes P_m$ where each $P_i\in\bS_i^+$.

\item \label{item:identity-sep:2}
  \( \cBr{n+1}\Norm{Q}I - Q \) is
  $\pa{\bS_1;\dots;\bS_m}$-separable whenever $Q$ is
  $\pa{\bS_1;\dots;\bS_m}$-separable.

\item \label{item:identity-sep:3}
  \( 2^{m-1}\sqrt{n}\cBr{n+1}\Fnorm{X} I - X \) is
  $\pa{\bS_1;\dots;\bS_m}$-separable for every
  $X\in\kprod{\bS}{1}{m}$.

\end{enumerate}

\end{theorem}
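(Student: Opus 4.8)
The plan is to establish the three items in sequence, using each as the engine for the next, so that all the real work is concentrated in item~\ref{item:identity-sep:2}.

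For item~\ref{item:identity-sep:1} I would exploit multiplicativity of the operator norm over tensor products of positive operators together with a telescoping expansion. Writing $c_i\defeq\norm{P_i}$, the single-factor observation from the start of Section~\ref{sec:gen:sep} gives $c_iI-P_i\in\bS_i^+$, so both $P_i$ and $Q_i\defeq c_iI-P_i$ are positive semidefinite members of $\bS_i$. Substituting $c_iI=P_i+Q_i$ and expanding $\bigotimes_i\pa{P_i+Q_i}$ produces $2^m$ terms indexed by subsets $S\subseteq\set{1,\dots,m}$, of which the empty set reproduces $P$ itself. Hence $\norm{P}I-P=\pa{\prod_ic_i}I-P$ is the sum of the remaining terms, each a tensor product of positive semidefinite operators drawn from the correct subspaces and therefore $\pa{\bS_1;\dots;\bS_m}$-separable; since the separable operators form a cone closed under addition, the whole sum is separable.

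For item~\ref{item:identity-sep:2} the key device is Carath\'eodory's theorem, which is exactly what supplies the factor $n+1$. I would write the separable operator $Q$ as a convex combination $Q=\sum_{k=1}^{N}\lambda_kP^{(k)}$ of $N\leq n+1$ product operators, which is possible because they lie in the $n$-dimensional space $\kprod{\bS}{1}{m}$. Item~\ref{item:identity-sep:1} then shows $\sum_k\lambda_k\Pa{\norm{P^{(k)}}I-P^{(k)}}=\Pa{\sum_k\lambda_k\norm{P^{(k)}}}I-Q$ is separable. The crucial estimate is that each $P^{(k)}$ is positive semidefinite, so $Q\succeq\lambda_kP^{(k)}\succeq0$ forces $\norm{Q}\geq\lambda_k\norm{P^{(k)}}$ term by term, whence $\sum_k\lambda_k\norm{P^{(k)}}\leq\pa{n+1}\norm{Q}$. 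Adding the nonnegative multiple $\cBr{\pa{n+1}\norm{Q}-\sum_k\lambda_k\norm{P^{(k)}}}I$ of the identity---itself a separable product operator---converts the separable operator above into $\pa{n+1}\norm{Q}I-Q$. Item~\ref{item:identity-sep:3} then follows by combining this with Theorem~\ref{thm:sep-bound}: that theorem writes an arbitrary $X$ as $X=X^+-X^-$ with $X^\pm$ separable and $\norm{X^\pm}\leq2^{m-1}\sqrt{n}\fnorm{X}$. Setting $c\defeq2^{m-1}\sqrt{n}\pa{n+1}\fnorm{X}$ and applying item~\ref{item:identity-sep:2} to $X^+$ shows $\pa{n+1}\norm{X^+}I-X^+$ is separable; since $\pa{n+1}\norm{X^+}\leq c$, padding with $\cBr{c-\pa{n+1}\norm{X^+}}I$ makes $cI-X^+$ separable, and adding the separable operator $X^-$ gives $cI-X^+ +X^-=cI-X$.

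The main obstacle is the norm bookkeeping in item~\ref{item:identity-sep:2}: writing $Q$ as an arbitrary sum of product operators gives no control over $\sum_k\lambda_k\norm{P^{(k)}}$ relative to $\norm{Q}$, because the number of terms is a priori unbounded. Two observations rescue the argument and must be deployed together: Carath\'eodory's theorem caps the number of terms at $n+1$, and positivity of the summands prevents any cancellation from inflating the sum, so the per-term bound $\lambda_k\norm{P^{(k)}}\leq\norm{Q}$ aggregates cleanly. Once this estimate is in hand, items~\ref{item:identity-sep:1} and~\ref{item:identity-sep:3} reduce to routine bookkeeping around it.
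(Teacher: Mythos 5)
Your proposal is correct and follows essentially the same route as the paper: item~\ref{item:identity-sep:1} via the substitution $\norm{P_i}I = P_i + (\norm{P_i}I - P_i)$ with both summands in $\bS_i^+$, item~\ref{item:identity-sep:2} via Carath\'eodory plus the observation that positivity of the summands gives the per-term norm bound $\norm{Q}\geq\norm{P^{(k)}}$ (allowing each identity term to be padded up to $\norm{Q}I$), and item~\ref{item:identity-sep:3} by combining item~\ref{item:identity-sep:2} with Theorem~\ref{thm:sep-bound} and adding back $X^-$. The only difference is cosmetic: you expand the full $m$-fold product $\bigotimes_i(P_i+Q_i)$ into $2^m$ terms at once, whereas the paper organizes the same expansion as an induction on $m$.
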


\begin{proof}
The proof of item \ref{item:identity-sep:1} is an easy but notationally
cumbersome induction on $m$.
The base case $m=1$ was noted at the beginning of Section \ref{sec:gen:sep}.
For the general case,
it is convenient to let $\kprod{I}{1}{m}$, $I_{m+1}$, and $\kprod{I}{1}{m+1}$
denote the identity elements of
$\kprod{\bS}{1}{m}$, $\bS_{m+1}$, and $\kprod{\bS}{1}{m+1}$, respectively.
The induction hypothesis states that the operator
\( S \defeq \norm{\kprod{P}{1}{m}} \kprod{I}{1}{m} - \kprod{P}{1}{m} \)
is $\pa{\bS_1;\dots;\bS_m}$-separable.
Just as in the base case, we know
\( S' \defeq \norm{P_{m+1}}I_{m+1}-P_{m+1} \)
lies in $\bS_{m+1}^+$.
Isolating the identity elements, these expressions may be rewritten
\begin{align*}
  \kprod{I}{1}{m} &=
    \frac{1}{ \Norm{\kprod{P}{1}{m}} } \cBr{\kprod{P}{1}{m} + S} \\
  I_{m+1} &=
    \frac{1}{ \Norm{P_{m+1}} } \cBr{ P_{m+1} + S' }.
\end{align*}
Taking the Kronecker product of these two equalities and rearranging the terms
yields
\[
  \Norm{\kprod{P}{1}{m+1}} \kprod{I}{1}{m+1} - \kprod{P}{1}{m+1} =
  \cBr{ \kprod{P}{1}{m} \otimes S' } +
  \cBr{ S \otimes P_{m+1} } +
  \cBr{ S \otimes S' }.
\]
The right side of this expression is clearly a
$\pa{\bS_1;\dots;\bS_{m+1}}$-separable operator;
the proof by induction is complete.

Item \ref{item:identity-sep:2} is proved as follows.
By Carath\'eodory's Theorem, every $(\bS_1;\dots;\bS_m)$-separable operator
$Q$ may be written as a sum of no more than $n+1$ product operators.
In particular,
\[ Q = \sum_{j=1}^{n+1} P_{1,j}\otimes\cdots\otimes P_{m,j} \]
where each $P_{i,j}$ is an element of $\bS_i^+$.
As each term in this sum is positive semidefinite, it holds that
\( \norm{Q} \geq \norm{P_{1,j}\otimes\cdots\otimes P_{m,j}} \)
for each $j$.
Item \ref{item:identity-sep:1} implies that the sum
\[
  \sum_{j=1}^{n+1}
    \Norm{P_{1,j}\otimes\cdots\otimes P_{m,j}}I -
          P_{1,j}\otimes\cdots\otimes P_{m,j}
\]
is also $(\bS_1;\dots;\bS_m)$-separable.
Naturally, each of the identity terms
$\Norm{P_{1,j}\otimes\cdots\otimes P_{m,j}}I$
in this sum may be replaced by
$\norm{Q}I$ without compromising
$(\bS_1;\dots;\bS_m)$-separability,
from which it follows that
\( \br{n+1}\norm{Q}I-Q \) is also $(\bS_1;\dots;\bS_m)$-separable.

To prove item \ref{item:identity-sep:3}, apply Theorem \ref{thm:sep-bound} to obtain
$X=X^+-X^-$ where $X^\pm$ are $(\bS_1;\dots;\bS_m)$-separable with
\( \norm{X^\pm}\leq 2^{m-1}\sqrt{n}\fnorm{X}. \)
By item \ref{item:identity-sep:2}, it holds that
\( \br{n+1}\norm{X^+} I - X^+ \)
is $\pa{\bS_1;\dots;\bS_m}$-separable, implying that
\[ 2^{m-1}\sqrt{n}\cBr{n+1}\Fnorm{X} I - X^+ \]
is also $\pa{\bS_1;\dots;\bS_m}$-separable.
To complete the proof, it suffices to note that adding $X^-$ to this operator
yields another $\pa{\bS_1;\dots;\bS_m}$-separable operator.
\end{proof}

\section{Ball Around the Completely Noisy Channel}
\label{sec:balls}

All the technical results required to establish a ball of LOSR operations around
the completely noisy channel were proven in Section \ref{sec:gen}.
This section introduces proper formalization of the relevant notions of quantum
information so that the existence of the ball of LOSR operations may be stated
and proven rigorously.

\subsection{Linear Algebra: More Review and More Notation}

Finite-dimensional complex Euclidean spaces $\mathbb{C}^n$ are denoted by
capital script letters such as $\cX$, $\cY$, and $\cZ$.
The (complex) space of linear operators $A:\cX\to\cX$ is denoted $\lin{\cX}$ and
$\her{\cX}\subset\lin{\cX}$ denotes the (real) subspace of Hermitian
operators within $\lin{\cX}$.
In keeping with the notational convention of Section \ref{sec:gen},
$\pos{\cX}\subset\her{\cX}$ denotes the cone of positive semidefinite
operators within $\her{\cX}$.
The identity operator in $\lin{\cX}$ is denoted $I_\cX$, and the subscript is
dropped whenever the space $\cX$ is clear from the context.

A \emph{super-operator} is a linear operator of the form
$\Phi : \lin{\cX}\to\lin{\cA}$.
The identity super-operator from $\lin{\cX}$ to itself is denoted
$\idsup{\cX}$---again, the subscript is dropped at will.
A super-operator is said to be
\begin{itemize}

\item
\emph{positive on $\bK$} if
$\Phi\pa{X}$ is positive semidefinite whenever $X\in\bK$.

\item
\emph{positive} if $\Phi$ is positive on $\pos{\cX}$.

\item
\emph{completely positive} if $\Pa{\Phi \otimes \idsup{\cZ}}$ is positive
for every choice of complex Euclidean space $\cZ$.

\item
\emph{trace-preserving} if $\tr{\Phi\pa{X}}=\tr{X}$ for all $X$.

\end{itemize}
The \emph{Choi-Jamio\l kowski representation} of a super-operator
$\Phi:\lin{\cX}\to\lin{\cA}$ is defined by
\[
  \jam{\Phi} \defeq \sum_{i,j=1}^{\dim\pa{\cX}}
  \Phi\pa{e_ie_j^*} \otimes e_ie_j^*
\]
where $\set{e_1,\dots,e_{\dim\pa{\cX}}}$ denotes the standard basis of $\cX$.
The operator $\jam{\Phi}$ is an element of $\lin{\cA\otimes\cX}$.
Indeed, the mapping $\Jamiolkowski$ is an isomorphism between super-operators
and $\lin{\cA\otimes\cX}$.
The Choi-Jamio\l kowski representation has several convenient properties.
For example, $\Phi$ is completely positive if and only if $\jam{\Phi}$ is
positive semidefinite and $\Phi$ is trace-preserving if and only if it meets the
partial trace condition
\[ \ptr{\cA}{\jam{\Phi}}=I_\cX. \]
An additional property of the Choi-Jamio\l kowski representation is given in
Proposition \ref{prop:identities} of Section \ref{sec:char:LOSE}.

The Kronecker product shorthand notation
for operators $\kprod{X}{i}{j}$ and spaces $\kprod{\bS}{i}{j}$
of Section \ref{sec:gen} is extended in the obvious way to
complex Euclidean spaces $\kprod{\cX}{i}{j}$ and
super-operators $\kprod{\Phi}{i}{j}$.

\subsection{Formal Definitions of LOSE and LOSR Operations}
\label{sec:balls:defs}

Associated with each $d$-level physical system is a $d$-dimensional complex
Euclidean space $\cX=\mathbb{C}^d$.
The \emph{quantum state} of such a system at some fixed point in time is
described uniquely by a positive semidefinite operator $\rho\in\pos{\cX}$ with
$\ptr{}{\rho}=1$.
A \emph{quantum operation} is a physically realizable discrete-time mapping
(at least in an ideal sense)
that takes as input a quantum state of some system
$\cX$ and produces as output a quantum state of some system $\cA$.
Every quantum operation is represented uniquely by a completely positive and
trace-preserving super-operator $\Phi:\lin{\cX}\to\lin{\cA}$.

A quantum operation
$\Lambda : \lin{\kprod{\cX}{1}{m}} \to \lin{\kprod{\cA}{1}{m}}$
is a \emph{$m$-party LOSE operation with finite entanglement}
if there exist complex Euclidean spaces
$\cE_1,\dots,\cE_m$, a quantum state
$\sigma\in\pos{\kprod{\cE}{1}{m}}$, and quantum operations
$\Psi_i : \lin{\cX_i\otimes\cE_i}\to\lin{\cA_i}$ for each $i=1,\dots,m$ such that
\[ \Lambda\pa{X} = \Pa{\kprod{\Psi}{1}{m}} \pa{X\otimes\sigma} \]
for all $X\in\lin{\kprod{\cX}{1}{m}}$.
The operation $\Lambda$ is a
\emph{finitely approximable $m$-party LOSE operation}
if it lies in the closure of the set of
$m$-party LOSE operations with finite entanglement.

As mentioned in the introduction,
the need to distinguish between LOSE operations with finite entanglement and
finitely approximable LOSE operations arises from the fact
that there exist LOSE operations---such
as the example appearing in Ref.~\cite{LeungT+08}---that
cannot be implemented with any finite amount of shared entanglement, yet can be
approximated to arbitrary precision by LOSE operations with finite entanglement.

An analytic consequence of this fact is that the set of LOSE
operations with finite entanglement is not a closed set.
Fortunately, the work of the present paper is unhindered by a more encompassing
notion of LOSE operations that includes the closure of that set.
As such, the term ``LOSE operation'' is used to refer to any finitely
approximable LOSE operation; the restriction to finite entanglement is made
explicit whenever it is required.

An \emph{$m$-party LOSR operation}
is just a LOSE operation with finite entanglement in which the shared state
$\sigma\in\pos{\kprod{\cE}{1}{m}}$ is
$m$-partite separable.
(That is, the operator $\sigma$ is
$\Pa{\her{\cE_1};\dots;\her{\cE_m}}$-separable.)
An equivalent definition for LOSR operations is established by the following elementary proposition.

\begin{prop}
\label{prop:LOSR-convex}

A quantum operation
$\Lambda : \lin{\kprod{\cX}{1}{m}}\to\lin{\kprod{\cA}{1}{m}}$
is an $m$-party LOSR operation if and only if $\Lambda$ can be written as a convex combination of product super-operators of the form $\kprod{\Phi}{1}{m}$ where each
$\Phi_i:\lin{\cX_i}\to\lin{\cA_i}$ is a quantum operation for $i=1,\dots,m$.

\end{prop}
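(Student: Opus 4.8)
The plan is to prove the two implications separately, exploiting the intuition that a separable shared state is nothing more than shared classical randomness: the randomness selects which product operation the parties jointly apply.

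For the forward direction, suppose $\Lambda$ is an $m$-party LOSR operation, so that $\Lambda\pa{X} = \Pa{\kprod{\Psi}{1}{m}}\pa{X\ot\sigma}$ for quantum operations $\Psi_i:\lin{\cX_i\ot\cE_i}\to\lin{\cA_i}$ and a separable state $\sigma$. First I would invoke the definition of $\pa{\her{\cE_1};\dots;\her{\cE_m}}$-separability to write $\sigma$ as a convex combination $\sigma = \sum_k q_k\,\sigma_{1,k}\ot\cdots\ot\sigma_{m,k}$; after pulling the traces of the individual factors into the coefficients, each $\sigma_{i,k}$ may be taken to be a genuine quantum state on $\cE_i$ while the $q_k$ remain a probability distribution (using $\tr{\sigma}=1$). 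The key step is then to absorb each fixed factor $\sigma_{i,k}$ into the corresponding local operation by defining $\Phi_{i,k}:\lin{\cX_i}\to\lin{\cA_i}$ via $\Phi_{i,k}\pa{Y} = \Psi_i\pa{Y\ot\sigma_{i,k}}$. Each such $\Phi_{i,k}$ is completely positive as a composition of the completely positive ``append $\sigma_{i,k}$'' map with $\Psi_i$, and it is trace-preserving because $\tr{\Phi_{i,k}\pa{Y}}=\tr{Y\ot\sigma_{i,k}}=\tr{Y}$ when $\sigma_{i,k}$ is a state; hence each $\Phi_{i,k}$ is a quantum operation. Evaluating $\Lambda$ on product inputs $X=X_1\ot\cdots\ot X_m$ and regrouping the tensor factors into the blocks $\cX_i\ot\cE_i$ shows that $\Lambda = \sum_k q_k\,\Pa{\Phi_{1,k}\ot\cdots\ot\Phi_{m,k}}$, and since both sides are linear and agree on a spanning set of product operators they agree everywhere, realizing $\Lambda$ as the desired convex combination.

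For the reverse direction, suppose $\Lambda = \sum_{k=1}^{N} q_k\,\Pa{\Phi_{1,k}\ot\cdots\ot\Phi_{m,k}}$ with each $\Phi_{i,k}$ a quantum operation and $\set{q_k}$ a probability distribution. I would construct shared randomness explicitly: take $\cE_i=\mathbb{C}^N$ with standard basis $\set{e_1,\dots,e_N}$ and let the shared state be the perfectly correlated separable state $\sigma = \sum_k q_k\, \Pa{e_ke_k^*}\ot\cdots\ot\Pa{e_ke_k^*}$, which is manifestly $\pa{\her{\cE_1};\dots;\her{\cE_m}}$-separable. The local operation $\Psi_i$ is then defined to read the classical index held in its copy of $\cE_i$ and apply the corresponding $\Phi_{i,k}$; concretely $\Psi_i\pa{Z}=\sum_k \Phi_{i,k}\Pa{\pa{I\ot e_k^*}Z\pa{I\ot e_k}}$. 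Each $\Psi_i$ is completely positive as a sum of compositions of completely positive maps, and a short computation using trace-preservation of the $\Phi_{i,k}$ together with $\sum_k e_ke_k^*=I$ shows $\Psi_i$ is trace-preserving. Evaluating $\Pa{\kprod{\Psi}{1}{m}}\pa{X\ot\sigma}$ on product inputs reproduces $\sum_k q_k\,\Pa{\Phi_{1,k}\ot\cdots\ot\Phi_{m,k}}\pa{X}=\Lambda\pa{X}$, and linearity extends this to all $X$.

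The proposition is elementary and I do not anticipate a genuine obstacle; the only points requiring care are bookkeeping ones. In the forward direction one must check that the coefficients can be renormalized so that the $\sigma_{i,k}$ are states and the $q_k$ a probability distribution, and one must keep track of the permutation that regroups $\kprod{\cX}{1}{m}\ot\kprod{\cE}{1}{m}$ into $\pa{\cX_1\ot\cE_1}\ot\cdots\ot\pa{\cX_m\ot\cE_m}$. In the reverse direction the one nontrivial verification is that the ``read-and-apply'' operation $\Psi_i$ is trace-preserving, which follows from the completeness relation for the basis of $\cE_i$; everything else is a direct expansion on product operators followed by an appeal to linearity.
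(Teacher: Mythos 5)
Your proposal is correct and follows essentially the same route as the paper's own proof: in the forward direction you absorb each normalized product factor $\sigma_{i,k}$ of the separable shared state into the local operation via $\Phi_{i,k}(Y)=\Psi_i(Y\ot\sigma_{i,k})$, and in the reverse direction you encode the convex weights into a classically correlated separable state on $\mathbb{C}^N$ and let each $\Psi_i$ read the index and apply the corresponding $\Phi_{i,k}$, which is exactly the paper's construction. The extra bookkeeping you flag (renormalization, the tensor-factor regrouping, and trace preservation of the read-and-apply map) is handled correctly and only makes explicit what the paper leaves implicit.
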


\begin{proof}

Let $\sigma$ be a $\Pa{\her{\cE_1};\dots;\her{\cE_m}}$-separable state and let
$\Psi_i:\lin{\cX_i\ot\cE_i}\to\lin{\cA_i}$ be quantum operations such that
$\Lambda\pa{X}=\Pa{\kprod{\Psi}{1}{m}}\pa{X\ot\sigma}$ for all $X$.
Let
\[ \sigma=\sum_{j=1}^n p_j \cBr{\sigma_{1,j}\ot\cdots\ot\sigma_{m,j}} \]
be a decomposition of $\sigma$ into a convex combination of product states, where each $\sigma_{i,j}\in\pos{\cE_i}$.
For each $i$ and $j$ define a quantum operation
\( \Phi_{i,j} : \lin{\cX_i}\to\lin{\cA_i} : X\mapsto\Psi_i\pa{X\ot\sigma_{i,j}} \)
and observe that
\[ \Lambda = \sum_{j=1}^n p_j \cBr{\Phi_{1,j}\ot\cdots\ot\Phi_{m,j}}. \]

Conversely, suppose that $\Lambda$ may be decomposed into a convex combination of product quantum operations as above.
Let $\cE_1,\dots,\cE_m$ be complex Euclidean spaces of dimension $n$ and let
$\set{e_{i,1},\dots,e_{i,n}}$ be an orthonormal basis of $\cE_i$ for each $i$.
Let
\[ \sigma=\sum_{j=1}^n p_j \cBr{ e_{1,j}e_{1,j}^* \ot\cdots\ot e_{m,j}e_{m,j}^*} \]
be a $\Pa{\her{\cE_1};\dots;\her{\cE_m}}$-separable state, let
\[
  \Psi_i:\lin{\cX_i\ot\cE_i}\to\lin{\cA_i} :
  X\ot E\mapsto\sum_{j=1}^n e_{i,j}^*Ee_{i,j} \cdot \Phi_{i,j}\pa{X}
\]
be quantum operations, and observe that
\( \Lambda\pa{X} = \Pa{\kprod{\Psi}{1}{m}}\pa{X\ot\sigma} \) for all $X$.
\end{proof}

By analogy with finitely approximable LOSE operations, one may
consider finitely approximable LOSR operations.
However, an easy application of Carath\'eodory's Theorem implies that any
LOSR operation can be implemented with finite randomness in such a way that each
of the spaces $\cE_1,\dots,\cE_m$ has dimension bounded (loosely) above by
$\dim\Pa{\her{\kprod{\cA}{1}{m}\otimes\kprod{\cX}{1}{m}}}$.
As suggested by this contrast, the sets of LOSE and LOSR operations
are not equal.
Indeed, while every LOSR operation is clearly a LOSE operation, much has been
made of the fact that there exist LOSE operations that are not LOSR
operations---this is quantum nonlocality.

\subsection{Ball of LOSR Operations Around the Completely Noisy Channel}

In the preliminary discussion of Section \ref{sec:gen} it was informally argued
that there exist subspaces
$\bQ_1,\dots,\bQ_m$
of Hermitian operators with the property that a quantum operation
$\Lambda$ is a LOSR operation if and only if
$\jam{\Lambda}$ is $\Pa{\bQ_1;\dots;\bQ_m}$-separable.
It was then established via Theorem \ref{thm:identity-sep} that any operator in
the product space $\kprod{\bQ}{1}{m}$ and close enough to the identity is
necessarily a $\Pa{\bQ_1;\dots;\bQ_m}$-separable operator, implying the
existence of a ball of LOSR operations around the completely noisy channel.
This last implication is addressed in Remark \ref{rem:ball:noisy} below---for
now, it remains only to identify the subspaces $\bQ_1,\dots,\bQ_m$.

Toward that end, recall from
Proposition \ref{prop:LOSR-convex}
that a quantum operation
$\Lambda : \lin{\kprod{\cX}{1}{m}} \to \lin{\kprod{\cA}{1}{m}}$
is a LOSR operation if and only if
it can be written as a convex combination of product super-operators of the form
$\kprod{\Phi}{1}{m}$.
Here each
$\Phi_i:\lin{\cX_i}\to\lin{\cA_i}$ is a quantum operation, so that
$\jam{\Phi_i}$ is positive semidefinite and satisfies
$\ptr{\cA_i}{\jam{\Phi_i}}=I_{\cX_i}$.
The set of all operators $X$ obeying the inhomogeneous linear condition
$\ptr{\cA_i}{X}=I_{\cX_i}$ is not a vector space.
But this set is easily extended to a unique smallest vector space by including
its closure under multiplication by real scalars, and it shall soon become
apparent that doing so poses no additional difficulty.

With this idea in mind, let
\[
  \bQ_i \defeq
  \Set{
    X\in\her{\cA_i\otimes\cX_i} : \ptr{\cA_i}{X}=\lambda I_{\cX_i}
    \textrm{ for some } \lambda\in\mathbb{R}
  }
\]
denote the subspace of operators
$X=\jam{\Phi}$ for which $\Phi:\lin{\cX_i}\to\lin{\cA_i}$
is a trace-preserving super-operator, or a scalar multiple thereof.
Clearly, $\jam{\Lambda}$ is $\Pa{\bQ_1;\dots;\bQ_m}$-separable whenever $\Lambda$ is a
LOSR operation.
Conversely, it is a simple exercise to verify that any quantum operation
$\Lambda$ for which $\jam{\Lambda}$ is $\Pa{\bQ_1;\dots;\bQ_m}$-separable is necessarily a
LOSR operation.

Letting
$n=\dim\pa{\kprod{\bQ}{1}{m}}$ and
$k = 2^{m-1}\sqrt{n}\cBr{n+1}$,
Theorem \ref{thm:identity-sep} states that the operator
$k\fnorm{A}I-A$ is $\Pa{\bQ_1;\dots;\bQ_m}$-separable for all
$A\in\kprod{\bQ}{1}{m}$.
In particular, $I-A$ is $\Pa{\bQ_1;\dots;\bQ_m}$-separable whenever
$\fnorm{A}\leq\frac{1}{k}$.
The following theorem is now proved.

\begin{theorem}[Ball around the completely noisy channel]
\label{thm:ball}

  Suppose $A\in\kprod{\bQ}{1}{m}$ satisfies $\fnorm{A} \leq \frac{1}{k}$.
  Then $I-A=\jam{\Lambda}$ where
  $\Lambda : \lin{\kprod{\cX}{1}{m}} \to \lin{\kprod{\cA}{1}{m}}$
  is an unnormalized LOSR operation.

\end{theorem}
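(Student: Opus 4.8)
The plan is to observe that essentially all of the analytic work is already complete, so that only a translation into super-operator language remains. Applying Theorem~\ref{thm:identity-sep}(\ref{item:identity-sep:3}) to the subspaces $\bQ_1,\dots,\bQ_m$ shows that $k\fnorm{A}I-A$ is $\Pa{\bQ_1;\dots;\bQ_m}$-separable for every $A\in\kprod{\bQ}{1}{m}$; combined with the hypothesis $\fnorm{A}\leq\frac1k$ and the separability of $I$ (which is the product of the operators $I_{\cA_i\otimes\cX_i}\in\bQ_i^+$), this upgrades to separability of $I-A$ itself, exactly as observed immediately before the theorem. All that remains is the ``simple exercise'' promised in Section~\ref{sec:balls:defs}: to verify that every $\Pa{\bQ_1;\dots;\bQ_m}$-separable operator is the Choi-Jamio\l kowski representation $\jam{\Lambda}$ of an unnormalized LOSR operation $\Lambda$.

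First I would fix a decomposition
\[
  I-A=\sum_{j} p_j\Pa{P_{1,j}\otimes\cdots\otimes P_{m,j}},
  \qquad P_{i,j}\in\bQ_i^+,\ p_j\geq 0,
\]
guaranteed by $\Pa{\bQ_1;\dots;\bQ_m}$-separability, and read each tensor factor through the Choi-Jamio\l kowski isomorphism. Since $P_{i,j}$ is positive semidefinite it equals $\jam{\Phi_{i,j}}$ for a completely positive super-operator $\Phi_{i,j}:\lin{\cX_i}\to\lin{\cA_i}$; and since $P_{i,j}\in\bQ_i$ satisfies $\ptr{\cA_i}{P_{i,j}}=\lambda_{i,j}I_{\cX_i}$ for some scalar $\lambda_{i,j}$---necessarily $\lambda_{i,j}\geq 0$, as the partial trace of a positive operator is positive---the map $\Phi_{i,j}$ scales the trace by $\lambda_{i,j}$. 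Such a map is precisely a nonnegative multiple of a genuine (trace-preserving, completely positive) quantum operation when $\lambda_{i,j}>0$, and is the zero map when $\lambda_{i,j}=0$.

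Next I would reassemble the pieces. Using multiplicativity of the Choi-Jamio\l kowski representation across tensor products, the product $P_{1,j}\otimes\cdots\otimes P_{m,j}$ is the representation of the product super-operator $\kprod{\Phi}{1}{m}$ from $\lin{\kprod{\cX}{1}{m}}$ to $\lin{\kprod{\cA}{1}{m}}$; summing over $j$ identifies $I-A$ with $\jam{\Lambda}$ for $\Lambda=\sum_j p_j\Pa{\kprod{\Phi}{1}{m}}$. Each summand is a scalar multiple of a product of quantum operations, so $\Lambda$ is a conic combination of such products---the unnormalized analogue of the convex combinations characterized in Proposition~\ref{prop:LOSR-convex}---and hence an unnormalized LOSR operation. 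The scalar factors $p_j\prod_i\lambda_{i,j}$, which need not make $\Lambda$ trace-preserving (note that even $I=\jam{\noisy}$ corresponds to the trace-scaling channel $\noisy$), are exactly what the qualifier ``unnormalized'' permits.

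The one step requiring genuine care---the main obstacle---is the multiplicativity invoked above. The representation $\Jam{\kprod{\Phi}{1}{m}}$ naturally lives in $\Lin{\kprod{\cA}{1}{m}\otimes\kprod{\cX}{1}{m}}$, whereas $\jam{\Phi_1}\otimes\cdots\otimes\jam{\Phi_m}$ lives in $\Lin{(\cA_1\otimes\cX_1)\otimes\cdots\otimes(\cA_m\otimes\cX_m)}$, so the two agree only after the canonical permutation reordering the factors into the grouped order. I would make this identification explicit---it is the same reordering already implicit in treating $\kprod{\bQ}{1}{m}$ as a subspace acting on the same space as $\jam{\Lambda}$---and verify the intertwining relation directly on the elementary tensors $e_ie_j^*$ defining $\Jamiolkowski$, after which the remainder of the argument is routine bookkeeping.
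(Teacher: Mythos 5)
Your proposal is correct and follows the paper's own route exactly: it invokes Theorem \ref{thm:identity-sep}, item \ref{item:identity-sep:3}, with the subspaces $\bQ_1,\dots,\bQ_m$ to get $\Pa{\bQ_1;\dots;\bQ_m}$-separability of $I-A$, and then identifies separable operators with Choi-Jamio\l kowski representations of unnormalized LOSR operations. The only difference is that you spell out the ``simple exercise'' (positivity of the $\lambda_{i,j}$, the zero-map case, and the tensor-factor reordering) that the paper leaves to the reader, which is a welcome addition rather than a deviation.
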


\begin{remark}
\label{rem:ball:noisy}

As mentioned in the introduction, the (unnormalized) completely noisy channel
$\noisy:X\mapsto\tr{X}I$ satisfies $\jam{\noisy}=I$.
It follows from Theorem \ref{thm:ball} that there is a ball of LOSR operations
centred at the (normalized) completely noisy channel.
In particular, letting
$d=\dim\pa{\kprod{\cA}{1}{m}}$,
this ball has radius $\frac{1}{kd}$ in Frobenius norm.

\end{remark}

\begin{remark}
\label{rem:ball:same}

The ball of Theorem \ref{thm:ball} consists entirely of LOSE operations that are
also LOSR operations.
However, there seems to be no obvious way to obtain a bigger ball if
such a ball is allowed to contain operations that are LOSE but not LOSR.
Perhaps a more careful future investigation will uncover such a ball.

\end{remark}

\begin{remark}
\label{rem:balls:noise}

The ball of Theorem \ref{thm:ball} is contained within the product space
$\kprod{\bQ}{1}{m}$,
which is a strict subspace of the space
spanned by all quantum operations
$\Phi:\lin{\kprod{\cX}{1}{m}} \to \lin{\kprod{\cA}{1}{m}}$.
Why was attention restricted to this subspace?
The answer is that there are no LOSE or LOSR operations $\Lambda$ for which
$\jam{\Lambda}$ lies outside $\kprod{\bQ}{1}{m}$.
In other words, $\kprod{\bQ}{1}{m}$ is the \emph{largest} possible space in which
to find a ball of LOSR operations.
This fact follows from the discussion in Section \ref{sec:no-sig},
wherein it is shown that $\kprod{\bQ}{1}{m}$
is generated by the so-called \emph{no-signaling} quantum operations.

Of course, there exist quantum operations arbitrarily close to the
completely noisy channel that are not no-signaling operations,
much less LOSE or LOSR operations.
This fact might seem to confuse the study of, say, the effects of noise on such
operations because a completely general model of noise would allow for
extremely tiny perturbations that nonetheless turn no-signaling operations into
signaling operations.
This confusion might even be exacerbated by the fact that separable quantum
states, by contrast, are resilient to \emph{arbitrary} noise: any conceivable
physical perturbation of the completely mixed state is separable, so long as the
perturbation has small enough magnitude.

There is, of course, nothing unsettling about this picture.
In any reasonable model of noise, perturbations to a LOSE or LOSR operation
occur only on the local operations performed by the parties involved,
or perhaps on the state they share.
It is easy to see that realistic perturbations such as these always maintain the
no-signaling property of these operations.
Moreover, any noise \emph{not} of this form could, for example, bestow
faster-than-light communication upon spatially separated parties.

\end{remark}

\section{Recognizing LOSE Operations is NP-hard}
\label{sec:hard}

In this section the existence of the ball of LOSR operations around the
completely noisy channel is employed to prove that the weak membership problem
for LOSE operations is strongly NP-hard.
Informally, the weak membership problem asks,
\begin{quote}
  ``Given a description of a quantum operation $\Lambda$
  and an accuracy parameter $\varepsilon$,
  is $\Lambda$ within distance $\varepsilon$ of a LOSE operation?''
\end{quote}

This result is achieved in several stages.
Section \ref{sec:app:games} reviews a relevant recent result of
Kempe \emph{et al.}
pertaining to quantum games.
In Section \ref{sec:app:validity} this result is exploited in order to prove
that the weak validity
problem---a relative of the weak membership problem---is strongly NP-hard for
LOSE operations.
Finally, Section \ref{sec:app:membership} illustrates how the strong
NP-hardness of the weak membership problem for LOSE operations follows
from a Gurvits-Gharibian-style application
of Liu's version
of the Yudin-Nemirovski\u\i{} Theorem.
It is also noted that similar NP-hardness results hold trivially for LOSR
operations, due to the fact that separable quantum states arise as a special
case of LOSR operations in which the input space is empty.

\subsection{Co-Operative Quantum Games with Shared Entanglement}
\label{sec:app:games}

Local operations with shared entanglement have been studied in the context of
two-player co-operative games.
In these games, a referee prepares a question for each player and the players each respond to the referee with an answer.
The referee evaluates these answers and declares that the players have jointly won or lost the game according to this evaluation.
The goal of the players, then, is to coordinate their answers so as to maximize the probability with which the referee declares them to be winners.
In a \emph{quantum} game the questions and answers are quantum states.

In order to differentiate this model from a one-player game, the players are not permitted to communicate with each other after the referee has sent his questions.
The players can, however, meet prior to the commencement of the game in order to agree on a strategy.
In a quantum game the players might also prepare a shared entangled quantum state so as to enhance the coordination of their answers to the referee.

More formally, a \emph{quantum game} $G=(q,\pi,\mathbf{R},\mathbf{V})$ is specified by:
\begin{itemize}
  \item
  A positive integer $q$ denoting the number of distinct questions.
  \item
  A probability distribution $\pi$ on the question indices $\{1,\dots,q\}$, according to which 
  the referee selects his questions.
  \item
  Complex Euclidean spaces $\cV,\cX_1,\cX_2,\cA_1,\cA_2$ corresponding to the different quantum systems used by the referee and players.
  \item
  A set $\mathbf{R}$ of quantum states
  $\mathbf{R} = \{ \rho_i \}_{i=1}^q \subset \pos{\cV\otimes\cX_1\otimes\cX_2}$.
  These states correspond to questions and are selected by the referee according to $\pi$.
  \item
  A set $\mathbf{V}$ of unitary operators $\mathbf{V} = \{V_i \}_{i=1}^q \subset
  \lin{\cV\otimes\cA_1\otimes\cA_2}$.
  These unitaries are used by the referee to evaluate the players' answers.
\end{itemize}
For convenience, the two players are called \emph{Alice} and \emph{Bob}.
The game is played as follows.
The referee samples $i$ according to $\pi$ and prepares the state
$\rho_i\in\mathbf{R}$,
which is placed in the three quantum registers corresponding to
$\cV\otimes\cX_1\otimes\cX_2$.
This state contains the questions to be sent to the players:
the portion of $\rho_i$ corresponding to $\cX_1$ is sent to Alice,
the portion of $\rho_i$ corresponding to $\cX_2$ is sent to Bob, and
the portion of $\rho_i$ corresponding to $\cV$ is kept by the referee as a
private workspace.
In reply, Alice sends a quantum register corresponding to $\cA_1$ to the referee,
as does Bob to $\cA_2$.
The referee then applies the unitary operation $V_i\in \mathbf{V}$
to the three quantum registers corresponding to $\cV\otimes\cA_1\otimes\cA_2$,
followed by a standard measurement
$\{\Pi_\mathrm{accept},\Pi_\mathrm{reject}\}$
that dictates the result of the game.

As mentioned at the beginning of this subsection,
Alice and Bob may not communicate once the game commences.
But they may meet prior to the commencement of the game in order prepare a
shared entangled quantum state $\sigma$.
Upon receiving the question register corresponding to $\cX_1$ from the referee,
Alice may perform any physically realizable quantum operation upon that register
and upon her portion of $\sigma$.
The result of this operation shall be contained in the quantum register
corresponding to $\cA_1$---this is the answer that Alice sends to the referee.
Bob follows a similar procedure to obtain his own answer register corresponding
to $\cA_2$.

For any game $G$, the \emph{value} $\omega(G)$ of $G$ is the supremum of the
probability with which the referee can be made to accept taken over all
strategies of Alice and Bob.

\begin{theorem}[Kempe \emph{et al.}~\cite{KempeK+07}]
  There is a fixed polynomial $p$ such that the following promise problem is
  NP-hard under mapping (Karp) reductions:
  \begin{description}
  \item[Input.]
    A quantum game $G=(q,\pi,\mathbf{R},\mathbf{V})$.
    The distribution $\pi$ and the sets $\mathbf{R},\mathbf{V}$ are each given
    explicitly:
    for each $i=1,\dots,q$, the probability $\pi(i)$ is given in binary,
    as are the real and complex parts of each entry of the matrices
    $\rho_i$ and $V_i$.
  \item[Yes.] The value $\omega(G)$ of the game $G$ is 1.
  \item[No.]  The value $\omega(G)$ of the game $G$ is less than
    $1-\frac{1}{p(q)}$.
  \end{description}
\end{theorem}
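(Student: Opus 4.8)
The plan is to recover the result along the lines of Kempe \emph{et al.}~\cite{KempeK+07}, by reducing from a gap constraint satisfaction problem whose NP-hardness is furnished by the PCP Theorem \cite{AroraL+98,AroraS98}. First I would fix a constant $\delta>0$ together with a constraint satisfaction problem for which it is NP-hard to distinguish satisfiable instances from those in which no assignment satisfies more than a $(1-\delta)$ fraction of the constraints. The natural encoding is a two-player one-round game: the referee samples a constraint according to $\pi$, forwards one variable to Alice and another to Bob, and accepts exactly when the returned values are mutually consistent and satisfy the sampled constraint. Casting this as a game $G=(q,\pi,\mathbf{R},\mathbf{V})$ in the quantum formalism is routine---the states $\rho_i$ carry the classical questions and the unitaries $V_i$ implement the referee's acceptance predicate---so all the content lies in analyzing $\omega(G)$.

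Completeness is immediate: a satisfying assignment induces a deterministic, unentangled strategy accepted with certainty, so $\omega(G)=1$ on yes-instances. The entire difficulty is soundness, and the obstacle is structural: the \emph{entangled} value of a two-player game can strictly exceed its classical value (the CHSH game being the standard witness), so the classical soundness gap cannot simply be quoted. I would therefore enlarge the game with consistency (oracularization) tests engineered to penalize any strategy whose local measurement operators fail to approximately commute on the shared state $\sigma$.

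The crux is then a rounding argument: if an entangled strategy is accepted with probability exceeding $1-1/p(q)$, passing the consistency tests must force Alice's and Bob's measurement families to nearly commute on $\sigma$, whereupon they can be approximately simultaneously diagonalized and thereby rounded to a distribution over classical assignments of comparable value. Contrapositively, a no-instance---whose classical value is at most $1-\delta$---admits no entangled strategy that is accepted too close to certainty. The delicate quantitative point, and the main obstacle, is that the commutation-to-classical operator-norm estimate degrades with the number of questions $q$, so the entangled soundness it delivers is only inverse-polynomial in $q$; this is precisely why the theorem promises a gap of $1/p(q)$ rather than a constant, and it is here that the careful norm bookkeeping of \cite{KempeK+07} is essential. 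Since the reduction is plainly polynomial-time computable, combining completeness with this inverse-polynomial soundness yields NP-hardness of the stated promise problem under mapping (Karp) reductions.
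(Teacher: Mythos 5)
The paper offers no proof of this statement: it is imported wholesale from Kempe \emph{et al.}~\cite{KempeK+07}, so there is nothing internal to compare your argument against. Judged on its own terms, your sketch correctly identifies the skeleton of the known argument---a PCP-based gap constraint satisfaction problem \cite{AroraL+98,AroraS98}, an encoding as a one-round game with perfect completeness, and an entangled-soundness analysis in which near-commutation of the provers' measurement operators on the shared state is leveraged into a rounding to a classical strategy. But as a proof it is incomplete in exactly the place where all the content lives: you explicitly defer the quantitative commutation-to-classical rounding lemma (``the careful norm bookkeeping'') to the very reference whose theorem you are trying to establish, which makes the attempt circular rather than merely sketchy.

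There is also a more structural gap in the plan itself. The theorem is stated for \emph{quantum} games---the questions are quantum states $\rho_i$ and the referee's test is a unitary $V_i$---and this is not incidental. With only two provers receiving classical questions derived from a CSP, the oracularization-style consistency tests you propose are precisely the ones that were known \emph{not} to suffice against entangled strategies; hardness of inverse-polynomial approximation for two-player games with classical questions was open at the time of \cite{KempeK+07}. Their proof circumvents this either by adding a third prover to perform the consistency check or, in the two-player variant relevant here, by encoding the questions in non-orthogonal quantum states so that the provers cannot determine which check is being applied; it is this quantum feature that forces the approximate commutation you want to exploit. Your outline never uses the quantumness of the questions, so the mechanism that makes the consistency test binding on entangled provers---the essential new idea of the cited work---is missing.
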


\subsection{Strategies and Weak Validity}
\label{sec:app:validity}

Viewing the two players as a single entity, a quantum game may be seen as a
two-message quantum interaction between the referee and the players---a message
from the referee to the players, followed by a reply from the players to the
referee.
The actions of the referee during such an interaction are completely specified
by the parameters of the game.

In the language of Ref.~\cite{GutoskiW07}, the game specifies a
\emph{two-turn measuring strategy} for the referee.%
\footnote{
  Actually, the game specifies a \emph{one}-turn measuring \emph{co}-strategy
  for the referee.
  But this co-strategy can be re-written as a two-turn measuring strategy by a
  careful choice of input and output spaces.
  These terminological details are not important to purpose of this paper.
}
This strategy has a Choi-Jamio\l kowski representation given by some positive
semidefinite operators
\[
  R_\mathrm{accept},R_\mathrm{reject} \in
  \pos{\kprod{\cA}{1}{2}\ot\kprod{\cX}{1}{2}},
\]
which are easily computed given the parameters of the game.

In these games, the players implement a
one-turn non-measuring strategy compatible with
$\{R_\mathrm{accept},R_\mathrm{reject}\}$.
The Choi-Jamio\l kowski representation of this strategy is given by a
positive semidefinite operator
\[
  P \in
  \pos{\kprod{\cA}{1}{2}\ot\kprod{\cX}{1}{2}}.
\]
For any fixed strategy $P$ for the players, the probability with which they
cause the referee to accept is given by the
inner product
\[
  \Pr[\textrm{Players win with strategy $P$}] = \inner{R_\mathrm{accept}}{P}.
\]

In any game, the players combine to implement some physical operation
\( \Lambda:\lin{\kprod{\cX}{1}{2}}\to\lin{\kprod{\cA}{1}{2}}. \)
It is clear that a given super-operator $\Lambda$ denotes a legal strategy for the
players if and only if $\Lambda$ is a LOSE operation.
For the special case of one-turn non-measuring strategies---such as that of the
players---the Choi-Jamio\l kowski representation of such a strategy is
given by the simple formula $P=\jam{\Lambda}.$

Thus, the problem studied by Kempe \emph{et al.}~\cite{KempeK+07} of deciding
whether $\omega\pa{G}=1$ can be reduced via the formalism of strategies to an
optimization problem over the set of LOSE operations:
\[
  \omega\pa{G} = \sup_{\textrm{$\Lambda\in$ LOSE}}
  \left\{
    \inner{R_\mathrm{accept}}{\jam{\Lambda}}
  \right\}.
\]
The following theorem is thus proved.

\begin{theorem}
\label{thm:wval}

  The weak validity problem for the set of LOSE [LOSR] operations is
  strongly NP-hard [NP-complete] under mapping (Karp) reductions:
  \begin{description}
  \item[Input.]
    A Hermitian matrix $R$, a real number $\gamma$, and a positive
    real number $\varepsilon>0$.
    The number $\gamma$ is given explicitly in binary, as are the
    real and complex parts of each entry of $R$.
    The number $\varepsilon$ is given in unary, where $1^s$ denotes $\varepsilon=1/s$.
  \item[Yes.]
    There exists a LOSE [LOSR] operation $\Lambda$ such that
    $\inner{R}{\jam{\Lambda}} \geq \gamma + \varepsilon$.
  \item[No.]
    For every LOSE [LOSR] operation $\Lambda$ we have
    $\inner{R}{\jam{\Lambda}} \leq \gamma - \varepsilon$.
  \end{description}

\end{theorem}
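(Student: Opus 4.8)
The plan is to obtain the LOSE result as a direct mapping reduction from the promise problem of Kempe \emph{et al.}, leaning entirely on the optimization formula $\omega(G)=\sup_{\Lambda\in\text{LOSE}}\inner{R_\mathrm{accept}}{\jam{\Lambda}}$ derived immediately above. Given a quantum game $G=(q,\pi,\mathbf{R},\mathbf{V})$, I would first compute the accept operator $R_\mathrm{accept}\in\pos{\kprod{\cA}{1}{2}\ot\kprod{\cX}{1}{2}}$ in polynomial time from the game parameters via the routine strategy-to-operator construction of Ref.~\cite{GutoskiW07}, and then emit the weak validity instance
\[
  R = R_\mathrm{accept}, \qquad \gamma = 1 - \frac{1}{2p(q)}, \qquad \varepsilon = \frac{1}{4p(q)},
\]
where $p$ is the fixed polynomial supplied by the theorem of Kempe \emph{et al.}

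With these choices $\gamma+\varepsilon = 1-\frac{1}{4p(q)}$ and $\gamma-\varepsilon = 1-\frac{3}{4p(q)}$, both of which lie strictly inside the promised gap, so correctness follows immediately from the optimization formula. In a Yes instance $\omega(G)=1$, and since the supremum defining $\omega(G)$ strictly exceeds $\gamma+\varepsilon$ there is some LOSE operation $\Lambda$ with $\inner{R}{\jam{\Lambda}}\geq\gamma+\varepsilon$; placing the threshold strictly below $1$ sidesteps any question of whether the supremum is attained. In a No instance $\omega(G)<1-\frac{1}{p(q)}<\gamma-\varepsilon$, so every LOSE operation $\Lambda$ satisfies $\inner{R}{\jam{\Lambda}}\leq\gamma-\varepsilon$. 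Because $1/\varepsilon = 4p(q)$ is polynomially bounded in the input length, the unary encoding $1^{4p(q)}$ of $\varepsilon$ is polynomially long and the whole map runs in polynomial time, so this is a legitimate mapping reduction witnessing \emph{strong} NP-hardness.

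For LOSR operations I would handle the two directions of NP-completeness separately. Hardness is inherited from separable states: restricting to games with one-dimensional input spaces $\cX_1,\cX_2$, a LOSR operation degenerates to the preparation of a separable state, and the weak validity problem over separable states is already (strongly) NP-complete by Gurvits~\cite{Gurvits02}, so the same hard instances serve here verbatim. For membership in NP, I would invoke Carath\'eodory's theorem exactly as in Section~\ref{sec:balls:defs} to bound, by a polynomial in the input size, both the number of product terms and the local dimensions needed to realize any LOSR operation; a guessed decomposition of $\Lambda$ of this bounded size is then a polynomial-length certificate, and verifying $\inner{R}{\jam{\Lambda}}\geq\gamma+\varepsilon$ reduces to assembling $\jam{\Lambda}$ from the product pieces and computing one inner product. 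The only place demanding real care is this membership argument---checking that the Carath\'eodory bound genuinely controls the bit-length of the certificate and that $\jam{\Lambda}$ can be reconstructed and tested in polynomial time; the LOSE reduction itself, by contrast, is little more than a choice of $\gamma$ and $\varepsilon$ layered on top of the supremum formula.
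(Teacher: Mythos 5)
Your proposal is correct and follows essentially the same route as the paper: the LOSE hardness comes from the identity $\omega(G)=\sup_{\Lambda\in\text{LOSE}}\inner{R_\mathrm{accept}}{\jam{\Lambda}}$ applied to the Kempe \emph{et al.}\ promise problem (the paper leaves the choice of $\gamma$ and $\varepsilon$ inside the gap implicit, which you make explicit), and the LOSR case is handled exactly as in the paper's accompanying remark, via the reduction from separable states for hardness and a Carath\'eodory bound on the shared randomness for membership in NP. No gaps.
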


\begin{remark}
\label{rem:sep-reduction}

The hardness result for LOSR operations follows from a simple reduction
from separable quantum states to LOSR operations:
every separable state may be written as a LOSR operation in which the input space has dimension one.
That weak validity for LOSR operations is in NP
(and is therefore NP-complete)
follows from the fact that all LOSR operations may be implemented with
polynomially-bounded shared randomness.

\end{remark}

\subsection{The Yudin-Nemirovski\u\i{} Theorem and Weak Membership}
\label{sec:app:membership}

Having established that the weak \emph{validity} problem for LOSE operations
is strongly NP-hard, the next step is to follow the leads of
Gurvits and Gharibian~\cite{Gurvits02,Gharibian08}
and apply Liu's version~\cite{Liu07} of the
Yudin-Nemirovski\u\i{} Theorem~\cite{YudinN76, GrotschelL+88}
in order to prove that the weak \emph{membership} problem for LOSE operations
is also strongly NP-hard.

The Yudin-Nemirovski\u\i{} Theorem establishes an oracle-polynomial-time
reduction from the weak validity problem to the weak membership problem for any
convex set $C$ that satisfies certain basic conditions.
One consequence of this theorem is that if the weak validity problem for $C$ is
NP-hard then the associated weak membership problem for $C$ is also
NP-hard.
Although hardness under mapping reductions is preferred, any hardness result
derived from the Yudin-Nemirovski\u\i{} Theorem in this way is only guaranteed
to hold under more inclusive oracle reductions.

The basic conditions that must be met by the set $C$ in order for the
Yudin-Nemirovski\u\i{} Theorem to apply are
(i) $C$ is bounded,
(ii) $C$ contains a ball, and
(iii) the size of the bound and the ball are polynomially related to the
dimension of the vector space containing $C$.
It is simple to check these criteria against the set of LOSE operations.
For condition (i), an explicit bound is implied by the following proposition.
The remaining conditions then follow from Theorem \ref{thm:ball}.

\begin{prop}
\label{prop:explicit-bound}

  For any quantum operation $\Phi:\lin{\cX}\to\lin{\cA}$ it holds that
  $\tnorm{\jam{\Phi}}=\dim(\cX)$.

\end{prop}

Here $\tnorm{X}\defeq\ptr{}{\sqrt{X^*X}}$ denotes the standard
\emph{trace norm} for operators.
Proposition \ref{prop:explicit-bound} implies a bound in terms of the
Frobenius norm via the inequality $\fnorm{X}\leq\tnorm{X}$.

\begin{proof}[Proof of Proposition \ref{prop:explicit-bound}]

  It follows from the definition of the Choi-Jamio\l kowski representation that
  \[
    \jam{\Phi} = \Pa{\Phi\ot\idsup{\cX}}(vv^*)
    \quad \textrm{ for } \quad
    v=\sum_{i=1}^{\dim(\cX)} e_i\ot e_i
  \]
  where $\set{e_1,\dots,e_{\dim(\cX)}}$ denotes the standard basis of $\cX$.
  As $\jam{\Phi}$ is positive semidefinite, it holds that
  \( \tnorm{\jam{\Phi}}=\ptr{}{\jam{\Phi}}. \)
  As $\Phi$  is trace-preserving, it holds that
  \( \ptr{}{\jam{\Phi}} = v^*v = \dim(\cX). \)
\end{proof}
%

The following theorem is now proved.
(As in Remark \ref{rem:sep-reduction},
the analogous result for LOSR operations follows from a straightforward
reduction from separable quantum states.)

\begin{theorem} \label{thm:wmem}

  The weak membership problem for the set of LOSE [LOSR] operations is
  strongly NP-hard [NP-complete] under oracle (Cook) reductions:
  \begin{description}

  \item[Input.]
    A Hermitian matrix $X\in\kprod{\bQ}{1}{m}$
    and a positive real number $\varepsilon>0$.
    The real and complex parts of each entry of $X$ are given explicitly in binary.
    The number $\varepsilon$ is given in unary, where $1^s$ denotes $\varepsilon=1/s$.

  \item[Yes.]
    $X=\jam{\Lambda}$ for some LOSE [LOSR] operation $\Lambda$.

  \item[No.]
    \( \norm{X-\jam{\Lambda}}\geq \varepsilon \)
    for every LOSE [LOSR] operation $\Lambda$.
    
  \end{description}

\end{theorem}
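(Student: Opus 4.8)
The plan is to obtain Theorem~\ref{thm:wmem} as a Gurvits-Gharibian-style consequence of the strong NP-hardness of weak \emph{validity} established in Theorem~\ref{thm:wval}, transferred to weak \emph{membership} by means of Liu's approximate version~\cite{Liu07} of the Yudin-Nemirovski\u\i{} Theorem. It suffices to prove hardness in the two-party case $m=2$ dictated by the game-based result underlying Theorem~\ref{thm:wval}; this has the pleasant effect of rendering the factor $2^{m-1}$ in the constant $k$ a genuine constant, so that every quantity appearing below is polynomial in the relevant dimension.

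First I would verify the three structural hypotheses that Liu's theorem demands of the target convex set, here taken to be the set $C$ of Choi-Jamio\l kowski representations $\jam{\Lambda}$ of LOSE operations, viewed inside the vector space $\kprod{\bQ}{1}{m}$. Note that $C$ is convex, since a convex combination of LOSE operations is again a LOSE operation---the parties use shared randomness to choose which strategy to execute---and that $C$ is closed, which is precisely why the finitely approximable notion of LOSE operation was adopted. Condition~(i), boundedness, is immediate from Proposition~\ref{prop:explicit-bound}: every quantum operation satisfies $\tnorm{\jam{\Phi}}=\dim(\cX)$, so $\fnorm{\jam{\Lambda}}\leq\dim\pa{\kprod{\cX}{1}{m}}$ for every LOSE operation $\Lambda$. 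Conditions~(ii) and~(iii)---that $C$ contains a ball whose radius is inverse-polynomially related to the ambient dimension---are exactly the content of Theorem~\ref{thm:ball} and Remark~\ref{rem:ball:noisy}: the normalized completely noisy channel $\frac{1}{d}\noisy$ belongs to $C$ and is surrounded by an explicit Frobenius-norm ball of radius $\frac{1}{kd}$, while for fixed $m=2$ the parameter $k=2^{m-1}\sqrt{n}\cBr{n+1}$ is polynomial in $n=\dim\pa{\kprod{\bQ}{1}{m}}$.

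With these hypotheses in hand, I would apply Liu's version of the Yudin-Nemirovski\u\i{} Theorem to produce an oracle-polynomial-time reduction from the weak validity problem for LOSE operations to the associated weak membership problem. Because Liu's formulation tolerates the accuracy parameter $\varepsilon=1/s$ supplied in unary, this reduction preserves the \emph{strong} character of the hardness, so composing it with Theorem~\ref{thm:wval} yields strong NP-hardness of weak membership. Since the Yudin-Nemirovski\u\i{} reduction is guaranteed only under oracle (Cook) reductions, the conclusion is stated in that weaker form. The LOSR case proceeds identically---the same ball from Theorem~\ref{thm:ball} and the same bound from Proposition~\ref{prop:explicit-bound} feed the same application of Liu's theorem, now starting from the LOSR half of Theorem~\ref{thm:wval} (equivalently, from Gharibian's separable-state result via the dimension-one-input reduction of Remark~\ref{rem:sep-reduction})---while membership in NP, and hence NP-completeness, follows from the polynomial bound on the shared randomness required to implement any LOSR operation.

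The principal obstacle I anticipate is not any isolated inequality but the quantitative bookkeeping needed to certify condition~(iii): one must simultaneously track the centre $\frac{1}{d}\noisy$, the outer bound $\dim\pa{\kprod{\cX}{1}{m}}$, and the inner radius $\frac{1}{kd}$, and confirm that the ratio of bound to radius is polynomial in the ambient dimension $n$ for the fixed number of parties $m=2$. It is this \emph{quantitative} form of the ball---rather than mere nonemptiness of the interior---on which the strong, as opposed to ordinary, NP-hardness conclusion ultimately rests.
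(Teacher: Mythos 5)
Your proposal matches the paper's own argument essentially step for step: strong NP-hardness of weak validity (Theorem~\ref{thm:wval}) is transferred to weak membership via Liu's approximate Yudin--Nemirovski\u\i{} Theorem, with boundedness supplied by Proposition~\ref{prop:explicit-bound} (together with $\fnorm{X}\leq\tnorm{X}$) and the inner ball by Theorem~\ref{thm:ball}, yielding hardness only under oracle reductions; the LOSR case is handled by the reduction from separable states of Remark~\ref{rem:sep-reduction}, with NP membership from the polynomial bound on shared randomness. This is the same route the paper takes, so no further comparison is needed.
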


\section{Characterizations of Local Quantum Operations}
\label{sec:char}

Characterizations of LOSE and LOSR operations are presented in this section.
These characterizations are reminiscent of the well-known characterizations of
bipartite and multipartite separable quantum states due to
Horodecki \emph{et al.}~\cite{HorodeckiH+96,HorodeckiH+01}.
Specifically, it is proven in Section \ref{sec:char:LOSR} that $\Lambda$ is a LOSR
operation if and only if $\varphi\pa{\jam{\Lambda}}\geq 0$ whenever the linear
functional $\varphi$ is positive on the cone of $\Pa{\bQ_1;\dots;\bQ_m}$-separable
operators.
This characterization of LOSR operations is a straightforward application of the
fundamental Separation Theorems of convex analysis.
(See, for example, Rockafellar \cite{Rockafellar70} for proofs of these
theorems.)

More interesting is the characterization of LOSE operations presented in Section
\ref{sec:char:LOSE}:
$\Lambda$ is a LOSE operation if and only if
$\varphi\pa{\jam{\Lambda}}\geq 0$ whenever the linear functional $\varphi$ is
\emph{completely} positive on that \emph{same cone} of $\Pa{\bQ_1;\dots;\bQ_m}$-
separable
operators.
Prior to the present work, the notion of complete positivity was only ever
considered in the context wherein the underlying cone is the positive
semidefinite cone.
Indeed, what it even \emph{means} for a super-operator or functional to be
``completely'' positive on some cone other than the positive semidefinite cone
must be clarified before for any nontrivial discussion can occur.

The results of this section do not rely upon the results in previous sections,
though much of the notation of those sections is employed here.
To that notation we add the following.
For any operator $X$, its transpose $X^\trans$ and complex conjugate
$\overline{X}$ are always taken with respect to the standard basis.

\subsection{Characterization of Local Operations with Shared Randomness}
\label{sec:char:LOSR}

The characterization of LOSR operations presented herein is an immediate
corollary of the following simple proposition.

\begin{prop} \label{prop:LOSR}

Let $K\subset\mathbb{R}^n$ be any closed convex cone.
A vector $x\in\mathbb{R}^n$ is an element of $K$ if and only if
$\varphi\pa{x}\geq 0$ for every linear functional
$\varphi : \mathbb{R}^n \to \mathbb{R}$
that is positive on $K$.

\end{prop}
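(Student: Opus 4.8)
The plan is to prove both directions of the biconditional, with the forward direction being routine and the reverse direction relying on a separation argument.

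\medskip

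\noindent\textbf{The easy direction.}
First I would establish that if $x\in K$ then $\varphi\pa{x}\geq 0$ for every linear functional $\varphi$ that is positive on $K$. This is immediate from the definition: a functional positive on $K$ is, by definition, nonnegative on every element of $K$, and $x$ is such an element. No convexity or closedness is needed here.

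\medskip

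\noindent\textbf{The hard direction.}
The substance is the contrapositive of the reverse implication: if $x\notin K$, then I must exhibit a linear functional $\varphi$ that is positive on $K$ yet satisfies $\varphi\pa{x}<0$. The key tool is the Separating Hyperplane Theorem for a closed convex set and an external point (as found in Rockafellar, cited in the paper). Since $K$ is a closed convex cone and $x\notin K$, there exists a hyperplane strictly separating $x$ from $K$; that is, there is a nonzero $y\in\mathbb{R}^n$ and a real $\alpha$ with $\inner{y}{z}\geq\alpha$ for all $z\in K$ and $\inner{y}{x}<\alpha$. I would then set $\varphi\pa{\cdot}\defeq\inner{y}{\cdot}$, the natural candidate functional, and verify its two required properties.

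\medskip

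\noindent\textbf{Exploiting the cone structure.}
The main subtlety—and where the cone hypothesis (as opposed to mere convexity) is essential—is upgrading the separation so that $\varphi$ is genuinely positive on $K$, i.e.\ $\varphi\pa{z}\geq 0$ for all $z\in K$, rather than merely bounded below by some $\alpha$. Here I would use the fact that $K$ is closed under nonnegative scaling: since $0\in K$, the separating inequality gives $\alpha\leq\inner{y}{0}=0$. Moreover, for any $z\in K$ and any scalar $t>0$ we have $tz\in K$, so $\inner{y}{tz}\geq\alpha$ for all $t>0$; dividing by $t$ and letting $t\to\infty$ forces $\inner{y}{z}\geq 0$. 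Thus $\varphi$ is positive on $K$. Combined with $\varphi\pa{x}=\inner{y}{x}<\alpha\leq 0$, this produces a positive functional that is strictly negative at $x$, completing the contrapositive. I expect this scaling argument to be the crux: it is what converts a generic separation bound into the homogeneous (positivity) condition demanded by the statement.
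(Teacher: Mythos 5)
Your proof is correct and follows essentially the same route as the paper: the forward direction is immediate, and the reverse direction separates $x$ from the closed convex cone $K$ and takes $\varphi=\inner{h}{\cdot}$. The only difference is that the paper directly invokes the Separation Theorems in their homogeneous form for cones, whereas you derive that form from generic point--set separation via the scaling argument ($0\in K$ forces $\alpha\leq 0$, and $tz\in K$ for $t\to\infty$ forces $\inner{y}{z}\geq 0$); this is a standard and entirely valid way to fill in the detail the paper leaves to its citation of Rockafellar.
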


\begin{proof}

The ``only if'' part of the proposition is immediate:
as $x$ is in $K$, any linear functional positive on $K$ must also be positive on
$x$.
For the ``if'' part of the proposition, suppose that
$x$ is not an element of $K$.
The Separation Theorems from convex analysis imply that
there exists a vector $h\in\mathbb{R}^n$
such that
$\inner{h}{y} \geq 0$ for all $y\in K$, yet
$\inner{h}{x} < 0$.
Let $\varphi : \mathbb{R}^n \to \mathbb{R}$
be the linear functional given by
$\varphi : z \mapsto \inner{h}{z}$.
It is clear that $\varphi$ is positive on $K$, yet $\varphi\pa{x}<0$.
\end{proof}

\begin{corollary}[Characterization of LOSR operations] \label{thm:LOSR}

  A quantum operation
  $\Lambda : \lin{\kprod{\cX}{1}{m}} \to \lin{\kprod{\cA}{1}{m}}$
  is an $m$-party LOSR operation if and only if
  $\varphi\pa{\jam{\Lambda}} \geq 0$ for every linear functional
  $\varphi : \lin{\kprod{\cA}{1}{m}\otimes\kprod{\cX}{1}{m}} \to \mathbb{C}$
  that is positive on the cone of $\Pa{\bQ_1;\dots;\bQ_m}$-separable operators.

\end{corollary}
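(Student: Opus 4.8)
The plan is to derive the corollary directly from Proposition~\ref{prop:LOSR} by applying it to the appropriate cone in the appropriate real vector space. First I would recall the two facts about LOSR operations already in hand from Section~\ref{sec:balls:defs}: by Proposition~\ref{prop:LOSR-convex} together with the subsequent identification of the subspaces $\bQ_1,\dots,\bQ_m$, a quantum operation $\Lambda$ is an $m$-party LOSR operation if and only if $\jam{\Lambda}$ is $\Pa{\bQ_1;\dots;\bQ_m}$-separable. Thus the statement to be proved is really a statement about membership in the cone $K$ of $\Pa{\bQ_1;\dots;\bQ_m}$-separable operators: $\Lambda$ is LOSR iff $\jam{\Lambda}\in K$, and I want to show this is equivalent to $\varphi\pa{\jam{\Lambda}}\geq 0$ for every $\varphi$ positive on $K$.

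The natural approach is to view everything inside the real vector space $\her{\kprod{\cA}{1}{m}\otimes\kprod{\cX}{1}{m}}$ of Hermitian operators, which is isomorphic to $\mathbb{R}^n$ for $n=\dim\Pa{\her{\kprod{\cA}{1}{m}\otimes\kprod{\cX}{1}{m}}}$. The set of $\Pa{\bQ_1;\dots;\bQ_m}$-separable operators is a convex cone inside this space, as noted in Section~\ref{sec:gen}; I would check that it is closed (being the conic hull of the compact set of product operators $P_1\otimes\cdots\otimes P_m$ with each $P_i\in\bQ_i^+$ of bounded trace, up to the usual Carath\'eodory argument already invoked for item~\ref{item:identity-sep:2} of Theorem~\ref{thm:identity-sep}). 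With $K$ identified as a closed convex cone in $\mathbb{R}^n$, Proposition~\ref{prop:LOSR} applies verbatim to the vector $x=\jam{\Lambda}$, yielding $\jam{\Lambda}\in K$ if and only if $\varphi\pa{\jam{\Lambda}}\geq 0$ for every linear functional positive on $K$. Combining this equivalence with the characterization $\Lambda\in\text{LOSR}\iff\jam{\Lambda}\in K$ gives exactly the claim.

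The one technical point to attend to is that the corollary quantifies over complex-linear functionals $\varphi:\lin{\kprod{\cA}{1}{m}\otimes\kprod{\cX}{1}{m}}\to\mathbb{C}$, whereas Proposition~\ref{prop:LOSR} is stated for real functionals on $\mathbb{R}^n$. I would bridge this by noting that since $K$ consists of Hermitian operators and membership $\jam{\Lambda}\in K$ is a condition on a Hermitian operator, only the restriction of $\varphi$ to the real subspace $\her{\cdots}$ matters; a complex functional positive on $K$ (meaning $\varphi(X)\geq 0$, in particular real, for $X\in K$) restricts to a real functional positive on $K$ in the sense of Proposition~\ref{prop:LOSR}, and conversely any real functional positive on $K$ extends to such a complex one. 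Hence the quantifier over complex $\varphi$ and the quantifier over real $\varphi$ pick out the same condition on $\jam{\Lambda}$.

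I expect the main obstacle to be purely bookkeeping rather than conceptual: verifying that $K$ is genuinely closed (so that the Separation Theorem underlying Proposition~\ref{prop:LOSR} applies without caveat) and handling the real-versus-complex functional distinction cleanly. Neither is deep, which is consistent with the paper's remark that this characterization ``is a straightforward application of the fundamental Separation Theorems of convex analysis.'' The whole proof is therefore a short reduction: identify $K$, confirm its closedness and convexity, invoke Proposition~\ref{prop:LOSR}, and translate between the cone-membership condition and the LOSR condition via Proposition~\ref{prop:LOSR-convex}.
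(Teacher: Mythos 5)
Your proposal is correct and follows essentially the same route as the paper: identify the quantum operation's membership in LOSR with membership of $\jam{\Lambda}$ in the closed convex cone of $\Pa{\bQ_1;\dots;\bQ_m}$-separable operators, apply Proposition \ref{prop:LOSR} inside the real space $\her{\kprod{\cA}{1}{m}\ot\kprod{\cX}{1}{m}}\cong\mathbb{R}^n$, and reconcile real versus complex functionals by restriction/extension. The paper's own proof consists of exactly these observations, so nothing further is needed.
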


\begin{proof}

In order to apply Proposition \ref{prop:LOSR}, it suffices to note the following:
\begin{itemize}
\item
The space
$\her{\kprod{\cA}{1}{m}\ot\kprod{\cX}{1}{m}}$
is isomorphic to $\mathbb{R}^n$ for
$n=\dim\pa{\kprod{\cA}{1}{m}\ot\kprod{\cX}{1}{m}}^2$.
\item
The $\Pa{\bQ_1;\dots;\bQ_m}$-separable operators form a closed convex
cone within $\her{\kprod{\cA}{1}{m}\ot\kprod{\cX}{1}{m}}$.
\end{itemize}

While Proposition \ref{prop:LOSR} only gives the desired result for real
linear functionals
$\varphi: \her{\kprod{\cA}{1}{m}\ot\kprod{\cX}{1}{m}} \to \mathbb{R}$,
it is trivial to construct a complex extension functional
$\varphi':\lin{\kprod{\cA}{1}{m}\otimes\kprod{\cX}{1}{m}} \to \mathbb{C}$
that agrees with $\varphi$ on $\her{\kprod{\cA}{1}{m}\ot\kprod{\cX}{1}{m}}$.
\end{proof}

\subsection{Characterization of Local Operations with Shared Entanglement}
\label{sec:char:LOSE}


\begin{notation}

  For each $i=1,\dots,m$ and each complex Euclidean space $\cE_i$,
  let $\bQ_i\pa{\cE_i}\subset\her{\cA_i\otimes\cX_i\otimes\cE_i}$
  denote the subspace of operators $\jam{\Psi}$ for which
  $\Psi:\lin{\cX_i\otimes\cE_i}\to\lin{\cA_i}$
  is a trace-preserving super-operator, or a scalar multiple thereof.

\end{notation}

\begin{theorem}[Characterization of LOSE operations]
\label{thm:LOSE}

  A quantum operation
  $\Lambda : \lin{\kprod{\cX}{1}{m}} \to \lin{\kprod{\cA}{1}{m}}$
  is an $m$-party LOSE operation if and only if
  $\varphi\pa{\jam{\Lambda}} \geq 0$ for every linear functional
  $\varphi : \lin{\kprod{\cA}{1}{m}\otimes\kprod{\cX}{1}{m}} \to \mathbb{C}$
  with the property that the super-operator
  $\Pa{\varphi \otimes \idsup{\kprod{\cE}{1}{m}}}$
  is positive on the cone of
  $\Pa{\bQ_1\pa{\cE_1};\dots;\bQ_m\pa{\cE_m}}$-separable operators
  for all choices of complex Euclidean spaces $\cE_1,\dots,\cE_m$.

\end{theorem}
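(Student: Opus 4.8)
The plan is to establish both directions from a single Choi-Jamio\l kowski identity, in the same spirit as Corollary~\ref{thm:LOSR} but with complete positivity playing the role that mere positivity played there. The identity I would record first---a multi-party Choi-Jamio\l kowski identity closely related to Proposition~\ref{prop:identities}---is that, for a finite-entanglement LOSE operation $\Lambda\pa{X}=\Pa{\kprod{\Psi}{1}{m}}\pa{X\ot\sigma}$, one has after the obvious reorganization of tensor factors
\[
  \jam{\Lambda}=\Ptr{\kprod{\cE}{1}{m}}{\Pa{\jam{\Psi_1}\ot\cdots\ot\jam{\Psi_m}}\Pa{I\ot\sigma^\trans}},
\]
where $I$ is the identity on the $\cA,\cX$ registers and $\sigma^\trans$ lives on the $\cE$ registers. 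Applying $\varphi$, using $\varphi\pa{\ptr{\kprod{\cE}{1}{m}}{M}}=\tr{\Pa{\varphi\ot\idsup{\kprod{\cE}{1}{m}}}\pa{M}}$, and pulling $\sigma^\trans$ through the identity factor of $\varphi\ot\idsup{\kprod{\cE}{1}{m}}$, I arrive at the central formula
\[
  \varphi\pa{\jam{\Lambda}}=\Tr{\cBr{\Pa{\varphi\ot\idsup{\kprod{\cE}{1}{m}}}\pa{\jam{\Psi_1}\ot\cdots\ot\jam{\Psi_m}}}\sigma^\trans}.
\]
Because each $\Psi_i$ is a quantum operation, $\jam{\Psi_i}\in\bQ_i\pa{\cE_i}^+$, so $\jam{\Psi_1}\ot\cdots\ot\jam{\Psi_m}$ is a single $\Pa{\bQ_1\pa{\cE_1};\dots;\bQ_m\pa{\cE_m}}$-separable term. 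The ``only if'' direction now follows at once: if $\varphi$ is completely positive on the separable cone then the bracketed operator is positive semidefinite, and since $\sigma^\trans$ is positive semidefinite the trace of the product is nonnegative; continuity of $\varphi$ extends $\varphi\pa{\jam{\Lambda}}\geq 0$ to every finitely approximable LOSE operation.

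For the ``if'' direction I would argue the contrapositive by separation. The LOSE operations are closed (by definition) and convex---two of them can be mixed by adjoining a shared classical flag to $\sigma$ telling each party which local strategy to run---and every LOSE Choi operator has the same positive trace $\dim\pa{\kprod{\cX}{1}{m}}$ by Proposition~\ref{prop:explicit-bound}. Hence the base $\Set{\jam{\Lambda}:\Lambda\text{ is LOSE}}$ is compact and avoids the origin, so its conic hull $\cC$ is a closed convex cone whose intersection with the trace hyperplane is exactly the LOSE Choi operators. If a quantum operation $\Lambda$ is not LOSE then $\jam{\Lambda}\notin\cC$, and the Separation Theorems furnish a functional $\varphi$ that is nonnegative on $\cC$ yet has $\varphi\pa{\jam{\Lambda}}<0$.

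It remains to verify that this separating $\varphi$ is completely positive on the separable cone, which is exactly the central formula read backwards. Given any $\Pa{\bQ_1\pa{\cE_1};\dots;\bQ_m\pa{\cE_m}}$-separable operator $R$, I would use Carath\'eodory's Theorem to write it as a nonnegative combination of terms $\jam{\Psi_{1,k}}\ot\cdots\ot\jam{\Psi_{m,k}}$ with each $\Psi_{i,k}$ a genuine quantum operation (discarding any degenerate factor, since $\ptr{\cA_i}{P_{i,k}}=0$ forces $P_{i,k}=0$). To test that $\Pa{\varphi\ot\idsup{\kprod{\cE}{1}{m}}}\pa{R}$ is positive semidefinite it suffices to pair it against an arbitrary $\tau\in\pos{\kprod{\cE}{1}{m}}$; choosing the shared state $\sigma=\tau^\trans$ and applying the central formula term by term rewrites this pairing as $\sum_k\varphi\pa{\jam{\Lambda_k}}$ for (possibly unnormalized) finite-entanglement LOSE operations $\Lambda_k$, each summand being nonnegative because $\jam{\Lambda_k}\in\cC$.

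The main obstacle is precisely this reverse step: upgrading a functional that is only assumed nonnegative on LOSE Choi operators to one that is completely positive on the cone for \emph{every} choice of auxiliary spaces $\cE_1,\dots,\cE_m$. The key is that the central formula lets any positive semidefinite $\tau$ on $\kprod{\cE}{1}{m}$ be realized as the (transposed) shared state of a LOSE operation, so testing positivity of $\Pa{\varphi\ot\idsup{\kprod{\cE}{1}{m}}}\pa{R}$ against all $\tau$ reduces to nonnegativity on $\cC$. The remaining bookkeeping I would dispatch routinely: that $\sigma^\trans$ and $\tau^\trans$ are positive semidefinite, that unnormalized $\Lambda_k$ still have Choi operators in the cone $\cC$, and---as in the proof of Corollary~\ref{thm:LOSR}---that the real separating functional on the Hermitian subspace extends to a complex functional on all of $\lin{\kprod{\cA}{1}{m}\ot\kprod{\cX}{1}{m}}$.
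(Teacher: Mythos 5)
Your proposal is correct and takes essentially the same route as the paper: your ``central formula'' is precisely Lemma \ref{lm:LOSE:ip} written as $\Tr{\,\cdot\,\sigma^\trans}$ instead of $\inner{\overline{\sigma}}{\cdot}$, the ``only if'' direction is the same positivity-plus-continuity argument, and the ``if'' direction is the same separation argument followed by testing $\Pa{\varphi\ot\idsup{\kprod{\cE}{1}{m}}}$ against arbitrary positive semidefinite $\tau$ realized (after transposition) as the shared state of a finite-entanglement LOSE operation. Your additional care with the conic hull of the LOSE Choi operators and with degenerate Carath\'eodory factors merely makes explicit some details the paper leaves implicit.
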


\begin{remark}
\label{rem:LOSE:cp}

The positivity condition of Theorem \ref{thm:LOSE} bears striking resemblance to
the familiar notion of complete positivity of a super-operator.
With this resemblance in mind, a linear functional $\varphi$ obeying
the positivity condition of Theorem \ref{thm:LOSE} is said to be
\emph{completely positive on the
$\Pa{\bQ_1\pa{\cE_1};\dots;\bQ_m\pa{\cE_m}}$-separable family of cones}.
In this sense, Theorem \ref{thm:LOSE} represents what seems,
to the knowledge of the author,
to be the first application of the notion of complete positivity to a family of
cones other than the positive semidefinite family.

Moreover, for any fixed choice of complex Euclidean spaces
$\cE_1,\dots,\cE_m$ there exists a linear functional $\varphi$ for which
$\Pa{\varphi \otimes \idsup{\kprod{\cE}{1}{m}}}$
is positive on the cone of
$\Pa{\bQ_1\pa{\cE_1};\dots;\bQ_m\pa{\cE_m}}$-separable operators,
and yet $\varphi$ is \emph{not} completely positive on this family of
cones.
This curious property is a consequence of the fact the set of LOSE
operations with finite entanglement is not a closed set.
By contrast, complete positivity (in the traditional sense) of a super-operator
$\Phi:\lin{\cX}\to\lin{\cA}$ is assured whenever $\Pa{\Phi\otimes\idsup{\cZ}}$ is
positive for a space $\cZ$ with dimension at least that of $\cX$.
(See, for example, Bhatia \cite{Bhatia07} for a proof of this fact.)

\end{remark}

The proof of Theorem \ref{thm:LOSE} employs the following helpful identity
involving the Choi-Jamio\l kowski representation for super-operators.
This identity may be verified by straightforward calculation.

\begin{prop}
\label{prop:identities}

  Let $\Psi:\lin{\cX\otimes\cE} \to \lin{\cA}$
  and let $Z\in\lin{\cE}$.
  Then the super-operator
  \(\Lambda : \lin{\cX} \to \lin{\cA} \)
  defined by
  \( \Lambda\pa{X} = \Psi \pa{X\otimes Z} \)
  for all $X$ satisfies
  \(
  \jam{\Lambda} =
  \Ptr{\cE}{ \cBr{ I_{\cA\otimes\cX} \otimes Z^\trans } \jam{\Psi} }.
  \)

\end{prop}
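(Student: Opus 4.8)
The plan is to verify the identity by direct expansion in the standard bases, tracking carefully the single place where a transpose can arise. First I would fix the standard bases $\set{e_i}$ of $\cX$ and $\set{f_k}$ of $\cE$, so that $\set{e_i\ot f_k}$ is the standard basis of $\cX\ot\cE$ and the elementary operators factor as $(e_i\ot f_k)(e_j\ot f_l)^* = (e_ie_j^*)\ot(f_kf_l^*)$. Applying the definition of the Choi--Jamio\l kowski representation to $\Psi$ then gives
\[
  \jam{\Psi} = \sum_{i,j,k,l} \Psi\Pa{e_ie_j^*\ot f_kf_l^*}\ot (e_ie_j^*)\ot (f_kf_l^*),
\]
an element of $\lin{\cA}\ot\lin{\cX}\ot\lin{\cE}$.

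Next I would substitute this expansion into the right-hand side of the claimed identity. Multiplication by $I_{\cA\ot\cX}\ot Z^\trans$ affects only the $\cE$ factor, and the partial trace over $\cE$ replaces that factor by the scalar $\tr{Z^\trans f_kf_l^*}$. A one-line computation gives $\tr{Z^\trans f_kf_l^*} = f_l^* Z^\trans f_k = Z_{kl}$, the $(k,l)$ entry of $Z$, so that
\[
  \Ptr{\cE}{\cBr{I_{\cA\ot\cX}\ot Z^\trans}\jam{\Psi}}
  = \sum_{i,j}\Pa{\sum_{k,l} Z_{kl}\,\Psi\Pa{e_ie_j^*\ot f_kf_l^*}}\ot (e_ie_j^*).
\]
The final step is to collapse the inner sum using linearity of $\Psi$ together with the expansion $Z=\sum_{k,l}Z_{kl} f_kf_l^*$, which yields $\sum_{k,l}Z_{kl}\,\Psi(e_ie_j^*\ot f_kf_l^*)=\Psi(e_ie_j^*\ot Z)=\Lambda(e_ie_j^*)$. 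Hence the right-hand side equals $\sum_{i,j}\Lambda(e_ie_j^*)\ot e_ie_j^* = \jam{\Lambda}$, as required.

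The only delicate point---and the reason the transpose appears in the statement---is the index bookkeeping in the scalar $\tr{Z^\trans f_kf_l^*}$. Tracing out $\cE$ contracts the $\cE$ indices in the order that produces $Z_{kl}$ rather than $Z_{lk}$; inserting $Z^\trans$ in place of $Z$ is precisely what cancels this transposition so that the recombination step recovers $Z$ itself (had we used $Z$, the same computation would have returned $\Psi(e_ie_j^*\ot Z^\trans)$ instead). Everything else is routine expansion, so I expect no genuine obstacle beyond keeping this one transpose straight.
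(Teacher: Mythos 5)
Your calculation is correct, and it is exactly the ``straightforward calculation'' that the paper alludes to but omits: the paper states Proposition \ref{prop:identities} with no written proof, so your basis expansion, the evaluation $\tr{Z^\trans f_kf_l^*}=Z_{kl}$, and the recombination $\sum_{k,l}Z_{kl}f_kf_l^*=Z$ supply precisely the intended verification. The remark about the transpose being what makes the contraction recover $Z$ rather than $Z^\trans$ is also the right way to see why the statement is phrased as it is.
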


%

The following technical lemma is also employed in the proof of
Theorem \ref{thm:LOSE}.

\begin{lemma} \label{lm:LOSE:ip}

  Let \( \Psi : \lin{\cX\otimes\cE} \to  \lin{\cA} \),
  let \( Z \in \lin{\cE}$,
  and
  let \( \varphi : \lin{ \cA\otimes\cX } \to \mathbb{C} \).
  Then the super-operator  \( \Lambda : \lin{\cX} \to \lin{\cA} \) defined by
  \( \Lambda\pa{X} = \Psi \Pa{ X\otimes Z } \) for all $X$ satisfies
  \[
    \varphi\pa{\jam{\Lambda}} = \Inner{ \overline{Z} }
    { \Pa{ \varphi \otimes \idsup{\cE} } \Pa{ \jam{ \Psi } } }.
  \]

\end{lemma}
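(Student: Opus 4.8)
The plan is to reduce everything to Proposition \ref{prop:identities}, which already rewrites $\jam{\Lambda}$ as a partial trace over $\cE$. Applying the functional $\varphi$ to both sides of that identity gives
\[
  \varphi\pa{\jam{\Lambda}} =
  \varphi\Pa{\Ptr{\cE}{\cBr{I_{\cA\otimes\cX}\otimes Z^\trans}\jam{\Psi}}},
\]
so the whole task is to show that the right-hand side equals
$\Inner{\overline{Z}}{\Pa{\varphi\otimes\idsup{\cE}}\pa{\jam{\Psi}}}$.

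The key step is to establish the bilinear identity
\[
  \varphi\Pa{\Ptr{\cE}{\cBr{I_{\cA\otimes\cX}\otimes Z^\trans}M}} =
  \Tr{Z^\trans \Pa{\varphi\otimes\idsup{\cE}}\pa{M}}
\]
valid for every $M\in\lin{\cA\otimes\cX\otimes\cE}$. To prove it, I would decompose $M$ as a sum of product operators $M=\sum_k A_k\otimes B_k$ with $A_k\in\lin{\cA\otimes\cX}$ and $B_k\in\lin{\cE}$, and verify the identity term by term. On a single product term, $\cBr{I_{\cA\otimes\cX}\otimes Z^\trans}\pa{A_k\otimes B_k}=A_k\otimes Z^\trans B_k$, so tracing out $\cE$ yields $\tr{Z^\trans B_k}\,A_k$ and the left side becomes $\sum_k \tr{Z^\trans B_k}\,\varphi\pa{A_k}$. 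On the right side, $\Pa{\varphi\otimes\idsup{\cE}}\pa{A_k\otimes B_k}=\varphi\pa{A_k}B_k$, whence $\Tr{Z^\trans \sum_k \varphi\pa{A_k}B_k}=\sum_k \varphi\pa{A_k}\tr{Z^\trans B_k}$. The two expressions agree, and bilinearity extends the identity to all $M$. Conceptually, this is just the observation that $\varphi$ acts on the $\cA\otimes\cX$ factors while $Z^\trans$ and $\trace_\cE$ act only on the complementary factor $\cE$, so they may be freely interchanged.

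Finally, I would convert the trace into the asserted inner product. Since $X^*=\overline{X^\trans}$ for every operator $X$, the conjugate $\overline{Z}$ satisfies $\pa{\overline{Z}}^*=Z^\trans$, so that
\[
  \Tr{Z^\trans N} = \Tr{\pa{\overline{Z}}^* N} = \Inner{\overline{Z}}{N}
\]
for any $N\in\lin{\cE}$, directly from the definition $\inner{X}{Y}=\ptr{}{X^*Y}$ of the Hilbert--Schmidt inner product. Taking $M=\jam{\Psi}$ and $N=\Pa{\varphi\otimes\idsup{\cE}}\pa{\jam{\Psi}}$ then chains the two identities together to complete the proof.

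I do not anticipate a genuine obstacle here: the heart of the argument is the routine interchange just described, justified because $\varphi$ and the pair $\pa{Z^\trans,\trace_\cE}$ act on complementary tensor factors. The only point that demands care is the bookkeeping of the transpose and complex conjugate, ensuring that $Z^\trans$ and $\overline{Z}$ line up correctly so that the final trace is genuinely the inner product with $\overline{Z}$ rather than with $Z$ or $Z^*$.
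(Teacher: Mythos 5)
Your proof is correct and follows essentially the same route as the paper's: both reduce the claim to Proposition \ref{prop:identities} together with the observation that $\varphi$ and the operations involving $Z^\trans$ and $\trace_\cE$ act on complementary tensor factors and hence may be interchanged, finishing with the bookkeeping identity $\pa{\overline{Z}}^* = Z^\trans$. The only difference is presentational: the paper packages the interchange by writing $\varphi\pa{\cdot}=\inner{H}{\cdot}$ for the Riesz operator $H=\varphi^*(1)$ and passing to the adjoint of $\varphi\otimes\idsup{\cE}$, whereas you verify the same identity directly on product operators and extend by linearity.
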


\begin{proof}

Let $H$ be the unique operator satisfying
$\varphi\pa{X}=\inner{H}{X}$ for all $X$
and note that the adjoint
$\varphi^*:\mathbb{C}\to\lin{ \cA\otimes\cX }$ satisfies
$\varphi^*(1)=H$.
Then
\begin{align*}
  \Inner{ \overline{Z} }{
    \Pa{ \varphi \otimes \idsup{\cE} }
    \Pa{ \jam{ \Psi } }
  }
  &= \Inner{ \varphi^*\pa{1} \otimes \overline{Z} }{ \jam{ \Psi } }
  = \Inner{ H \otimes \overline{Z} }{ \jam{ \Psi } } \\
  &= \Inner{ H }{
      \Ptr{\cE}
        { \cBr{ I_{ \cA\otimes\cX } \otimes Z^\trans } \jam{\Psi} }
    }
  = \varphi\pa{\jam{\Lambda}}.
\end{align*}
\end{proof}


\begin{proof}[Proof of Theorem \ref{thm:LOSE}]

For the ``only if'' part of the theorem,
let $\Lambda$ be any LOSE operation with finite entanglement and let
$\Psi_1,\dots,\Psi_m,\sigma$ be such that
$\Lambda : X \mapsto \Pa{\kprod{\Psi}{1}{m}}\pa{X\otimes \sigma}$.
Let $\varphi$ be any linear functional on
$\lin{\kprod{\cA}{1}{m}\otimes\kprod{\cX}{1}{m}}$
that satisfies the stated positivity condition.
Lemma \ref{lm:LOSE:ip} implies
\[
  \varphi\pa{\jam{\Lambda}} = \Inner{ \overline{\sigma} }{
    \Pa{\varphi \otimes \idsup{\kprod{\cE}{1}{m}} }
    \Pa{ \jam{\kprod{\Psi}{1}{m}} }
  } \geq 0.
\]
A standard continuity argument establishes the desired implication when
$\Lambda$ is a finitely approximable LOSE operation.

For the ``if'' part of the theorem, suppose that
$\Xi : \lin{\kprod{\cX}{1}{m}} \to \lin{\kprod{\cA}{1}{m}}$ is a
quantum operation that is \emph{not} a LOSE operation.
The Separation Theorems from convex analysis imply that there
is a Hermitian operator
$H\in\her{\kprod{\cA}{1}{m}\otimes\kprod{\cX}{1}{m}}$ such that
$\inner{H}{\jam{\Lambda}} \geq 0$ for all LOSE operations $\Lambda$, yet
$\inner{H}{\jam{\Xi}} < 0$.
Let $\varphi : \lin{\kprod{\cA}{1}{m}\otimes\kprod{\cX}{1}{m}}\to\mathbb{C}$
be the linear functional given by
$\varphi : X \mapsto \inner{H}{X}$.

It remains to verify that $\varphi$ satisfies the desired positivity condition.
Toward that end, let $\cE_1,\dots,\cE_m$ be arbitrary complex Euclidean spaces.
By convexity, it suffices to consider only those
$\Pa{\bQ_1\pa{\cE_1};\dots;\bQ_m\pa{\cE_m}}$-separable operators that are product
operators.
Choose any such operator and note that, up to a scalar multiple,
it has the form
$\jam{ \kprod{\Psi}{1}{m} }$ where each
$\Psi_i : \lin{\cX_i\otimes\cE_i} \to \lin{\cA_i}$ is a quantum operation.
The operator
\[
  \Pa{ \varphi \otimes \idsup{\kprod{\cE}{1}{m}} }
  \Pa{ \jam{ \kprod{\Psi}{1}{m} } }
\]
is positive semidefinite if and only if it has a nonnegative inner product with
every density operator in $\pos{\kprod{\cE}{1}{m}}$.
As $\sigma$ ranges over all such operators, so does $\overline{\sigma}$.
Moreover, every such $\sigma$---together with $\Psi_1,\dots,\Psi_m$---induces a
LOSE operation $\Lambda$ defined by
$\Lambda : X \mapsto \Pa{\kprod{\Psi}{1}{m}}\pa{X\otimes \sigma}$.
Lemma \ref{lm:LOSE:ip} and the choice of $\varphi$ imply
\[
0 \leq \varphi \pa{\jam{\Lambda}} =
\Inner{ \overline{\sigma} }{
  \Pa{\varphi \otimes \idsup{\kprod{\cE}{1}{m}} }
  \Pa{ \jam{\kprod{\Psi}{1}{m}} }
},
\]
and so $\varphi$ satisfies the desired positivity condition.
\end{proof}

\section{No-Signaling Operations} \label{sec:no-sig}

It was claimed in Remark \ref{rem:balls:noise} of Section \ref{sec:balls}
that the product space
$\kprod{\bQ}{1}{m}$
is spanned by Choi-Jamio\l kowski representations of no-signaling operations.
It appears as though this fact has yet to be noted explicitly in the literature,
so a proof is offered in this section.
More accurately, two characterizations of no-signaling operations are presented
in Section \ref{sec:no-sig:char}, each of which is expressed as a condition on
Choi-Jamio\l kowski representations of super-operators.
The claim of Remark \ref{rem:balls:noise} then follows immediately from these
characterizations.
Finally, Section \ref{sec:no-sig:counter-example} provides
an example of a so-called \emph{separable} no-signaling operation that is
not a LOSE operation, thus ruling out an easy ``short cut'' to the ball of LOSR
operations revealed in Theorem \ref{thm:ball}.

\subsection{Formal Definition of a No-Signaling Operation}

Intuitively, a quantum operation $\Lambda$ is no-signaling if it cannot be used by
spatially separated parties to violate relativistic causality.
Put another way, an operation $\Lambda$ jointly implemented by several parties is
no-signaling if those parties cannot use $\Lambda$ as a ``black box'' to
communicate with one another.

In order to facilitate a formal definition for no-signaling operations,
the shorthand notation for Kronecker products from Section \ref{sec:gen} must be
extended:
if $K\subseteq\set{1,\dots,m}$ is an arbitrary index set with
$K=\set{k_1,\dots,k_n}$ then we write
\[ \cX_K \defeq \cX_{k_1} \otimes \cdots \otimes \cX_{k_n} \]
with the convention that $\cX_\emptyset = \mathbb{C}$.
As before, a similar notation also applies to operators, sets of operators, and
super-operators.
The notation $\overline{K}$ refers to the set of indices \emph{not} in $K$, so that $K$,$\overline{K}$ is a partition of $\set{1,\dots,m}$.

Formally then, a quantum operation
$\Lambda:\lin{\kprod{\cX}{1}{m}}\to\lin{\kprod{\cA}{1}{m}}$ is an
\emph{$m$-party no-signaling operation} if for each index set
$K\subseteq\set{1,\dots,m}$ we have
\( \ptr{\cA_K}{\Lambda\pa{\rho}} = \ptr{\cA_K}{\Lambda\pa{\sigma}} \)
whenever
\( \ptr{\cX_K}{\rho} = \ptr{\cX_K}{\sigma}. \)

What follows is a brief argument that this condition captures the meaning of a
no-signaling operation---a more detailed discussion of this condition can be
found in Beckman \emph{et al.}~\cite{BeckmanG+01}.
If $\Lambda$ is no-signaling and $\rho,\sigma$ are locally indistinguishable to a
coalition $K$ of parties
(for example, when \( \ptr{\cX_{\overline{K}}}{\rho} = \ptr{\cX_{\overline{K}}}{\sigma} \))
then clearly the members of $K$ cannot perform a measurement on their portion of
the output that might allow them to distinguish $\Lambda\pa{\rho}$ from
$\Lambda\pa{\sigma}$
(that is, \( \ptr{\cA_{\overline{K}}}{\Lambda\pa{\rho}} = \ptr{\cA_{\overline{K}}}{\Lambda\pa{\sigma}} \)).
For otherwise, the coalition $K$ would have extracted
information---a signal---that would allow it to distinguish
$\rho$ from $\sigma$.

Conversely, if there exist input states $\rho,\sigma$ such that
\( \ptr{\cX_{\overline{K}}}{\rho} = \ptr{\cX_{\overline{K}}}{\sigma} \) and yet
\( \ptr{\cA_{\overline{K}}}{\Lambda\pa{\rho}} \neq \ptr{\cA_{\overline{K}}}{\Lambda\pa{\sigma}} \)
then there exists a measurement that allows the coalition $K$ to distinguish
these two output states with nonzero bias, which implies that signaling must
have occurred.

It is not hard to see that every LOSE operation is also a no-signaling
operation.
Conversely, much has been made of the fact that there exist no-signaling
operations that are not LOSE operations---this is so-called ``super-strong''
nonlocality, exemplified by the popular ``nonlocal box'' discussed in Section
\ref{sec:no-sig:counter-example}.

\subsection{Two Characterizations of No-Signaling Operations}
\label{sec:no-sig:char}

In this section it is shown that the product space $\kprod{\bQ}{1}{m}$ is spanned
by Choi-Jamio\l kowski representations of no-signaling operations.
(Recall that each $\bQ_i\subset\her{\cA_i\otimes\cX_i}$ denotes the subspace of
Hermitian operators $\jam{\Phi}$ for which $\Phi:\lin{\cX_i}\to\lin{\cA_i}$ is a
trace-preserving super-operator, or a scalar multiple thereof.)
Indeed, that fact is a corollary of the following characterizations of
no-signaling operations.

\begin{theorem}[Two characterizations of no-signaling operations]
\label{thm:char:no-signal}

Let
$\Lambda:\lin{\kprod{\cX}{1}{m}}\to\lin{\kprod{\cA}{1}{m}}$
be a quantum operation.
The following are equivalent:
\begin{enumerate}

\item \label{item:no-signal}
$\Lambda$ is a no-signaling operation.

\item \label{item:prod-space}
$\jam{\Lambda}$ is an element of $\kprod{\bQ}{1}{m}$.

\item \label{item:constraints}
For each index set $K\subseteq\set{1,\dots,m}$
there exists
an operator
$Q\in\pos{\cA_{\overline{K}}\ot\cX_{\overline{K}}}$
with
\( \ptr{\cA_K}{\jam{\Lambda}} = Q\otimes I_{\cX_K}. \)

\end{enumerate}

\end{theorem}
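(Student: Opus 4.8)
The plan is to establish the two equivalences \ref{item:no-signal}$\Leftrightarrow$\ref{item:constraints} and \ref{item:prod-space}$\Leftrightarrow$\ref{item:constraints}, using item \ref{item:constraints} as a hinge; the guiding idea is that the operational no-signaling condition, read one coalition $K$ at a time, is a statement about the reduced super-operator $\rho\mapsto\ptr{\cA_K}{\Lambda\pa{\rho}}$ that translates cleanly into a constraint on $\ptr{\cA_K}{\jam{\Lambda}}$. For \ref{item:no-signal}$\Leftrightarrow$\ref{item:constraints} I would fix $K$ and set $\Theta_K:\rho\mapsto\ptr{\cA_K}{\Lambda\pa{\rho}}$, a super-operator from $\lin{\kprod{\cX}{1}{m}}$ to $\lin{\cA_{\overline{K}}}$. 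Since $\Theta_K$ is $\ptr{\cA_K}{\cdot}$ composed with $\Lambda$, its Choi-Jamio\l kowski representation is $\jam{\Theta_K}=\ptr{\cA_K}{\jam{\Lambda}}$. The no-signaling requirement for this $K$ says precisely that $\Theta_K\pa{\rho}$ depends only on $\ptr{\cX_K}{\rho}$, that is, by linearity, that $\Theta_K$ annihilates every $\rho$ with $\ptr{\cX_K}{\rho}=0$. A short calculation in the standard basis (equivalently, via the reconstruction identity $\Theta_K\pa{\rho}=\ptr{\cX}{\jam{\Theta_K}\pa{I\otimes\rho^\trans}}$) shows this holds if and only if $\jam{\Theta_K}=Q\otimes I_{\cX_K}$ for some $Q\in\lin{\cA_{\overline{K}}\otimes\cX_{\overline{K}}}$; positivity of $Q$ is automatic because $\jam{\Lambda}$, hence $\ptr{\cA_K}{\jam{\Lambda}}$, is positive semidefinite. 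Ranging over all $K$ gives the equivalence term by term.

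The implication \ref{item:prod-space}$\Rightarrow$\ref{item:constraints} is routine. For a product $X_1\otimes\cdots\otimes X_m$ with each $X_i\in\bQ_i$ one has $\ptr{\cA_i}{X_i}=\lambda_i I_{\cX_i}$, so $\ptr{\cA_K}{X_1\otimes\cdots\otimes X_m}$ is a scalar multiple of $\Pa{\bigotimes_{i\in\overline{K}}X_i}\otimes I_{\cX_K}$, which already has the required shape. Since item \ref{item:constraints}, stripped of its positivity clause, is a linear condition on $\jam{\Lambda}$, it passes from products to the span $\kprod{\bQ}{1}{m}$, and positivity of $Q$ again follows from positivity of $\jam{\Lambda}$.

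The substantive direction is \ref{item:constraints}$\Rightarrow$\ref{item:prod-space}, which I would argue via orthogonal complements. A direct computation gives $\bQ_i^\perp=I_{\cA_i}\otimes\cT_i$, where $\cT_i\subset\her{\cX_i}$ is the subspace of traceless Hermitian operators, and the standard product-basis description of a tensor product of subspaces yields
\[
  \Pa{\kprod{\bQ}{1}{m}}^\perp=\sum_{i=1}^m \her{\cA_1\otimes\cX_1}\otimes\cdots\otimes\pa{I_{\cA_i}\otimes\cT_i}\otimes\cdots\otimes\her{\cA_m\otimes\cX_m}.
\]
Thus it suffices to show that $\jam{\Lambda}$ is orthogonal to each summand, i.e.\ that $\Inner{\Pa{\bigotimes_{j\neq i}Y_j}\otimes\pa{I_{\cA_i}\otimes T_i}}{\jam{\Lambda}}=0$ for all $Y_j\in\her{\cA_j\otimes\cX_j}$ and every traceless $T_i$. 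Tracing out $\cA_i$ first and invoking item \ref{item:constraints} for the singleton $K=\set{i}$, namely $\ptr{\cA_i}{\jam{\Lambda}}=Q\otimes I_{\cX_i}$, collapses this inner product to $\trace\cBr{\Pa{\bigotimes_{j\neq i}Y_j}Q}\cdot\trace\pa{T_i}$, which vanishes because $T_i$ is traceless. Hence $\jam{\Lambda}\in\kprod{\bQ}{1}{m}$; it is worth noting that only the singleton instances of item \ref{item:constraints} are actually used.

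The main obstacle is this last step: correctly identifying $\bQ_i^\perp$ and verifying that the singleton constraints alone force orthogonality to the entire annihilator of the product space. The remaining bookkeeping is routine and I would handle it with little comment—most visibly the permutation isomorphism identifying $\lin{\kprod{\cA}{1}{m}\otimes\kprod{\cX}{1}{m}}$ with $\bigotimes_i\lin{\cA_i\otimes\cX_i}$, under which $\kprod{\bQ}{1}{m}$ becomes a subspace containing $\jam{\Lambda}$, together with the elementary facts that partial trace commutes with the Choi-Jamio\l kowski map on the output systems and that $Q\otimes I_{\cX_K}$ is positive semidefinite only if $Q$ is.
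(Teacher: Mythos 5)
Your proof is correct, and for the one nontrivial implication it takes a genuinely different route from the paper's. Your equivalence of items \ref{item:no-signal} and \ref{item:constraints} via the reduced map $\Theta_K:\rho\mapsto\ptr{\cA_K}{\Lambda\pa{\rho}}$ and the identity $\jam{\Theta_K}=\ptr{\cA_K}{\jam{\Lambda}}$ is essentially the computation the paper performs for the implication from item \ref{item:no-signal} to item \ref{item:constraints} (including the same silent linearization from density operators to arbitrary operators), repackaged so that it runs in both directions; this lets you dispense with the paper's separate argument that item \ref{item:prod-space} implies item \ref{item:no-signal}, which decomposes $\jam{\Lambda}$ into a sum of products of scalar multiples of trace-preserving maps. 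The real divergence is in the implication from item \ref{item:constraints} to item \ref{item:prod-space}. The paper isolates this as Theorem \ref{thm:TP-constraints} and proves it by induction on $m$, peeling off tensor factors with dual functionals and invoking the partial-trace constraints for every index set $K$ along the way. You instead compute $\bQ_i^\perp$ inside $\her{\cA_i\otimes\cX_i}$ to be $I_{\cA_i}$ tensored with the traceless Hermitian operators on $\cX_i$ (a dimension count confirms this identification), apply the standard formula for the annihilator of a tensor product of subspaces, and verify orthogonality to each summand using only the singleton constraint $K=\set{i}$; the displayed collapse of the inner product to a multiple of $\tr{T_i}$ checks out. This is shorter, avoids the induction entirely, and yields a strictly stronger conclusion than Theorem \ref{thm:TP-constraints}: the $m$ singleton constraints alone already force membership in $\kprod{\bQ}{1}{m}$, and hence imply the constraints for all $2^m$ index sets. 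The only costs are the routine bookkeeping you flag---the permutation isomorphism and the extension of the no-signaling condition from states to arbitrary operators---which the paper likewise passes over with a one-line justification.
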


\begin{remark}

Membership in the set of no-signaling operations may be verified in
polynomial time by checking the linear constraints in
Item \ref{item:constraints} of Theorem \ref{thm:char:no-signal}.
While the number of such constraints grows exponentially with $m$, 
this exponential growth is not a problem because the number of parties
$m$ is always $O\pa{\log n}$ for $n=\dim\pa{\kprod{\bQ}{1}{m}}$.
(This logarithmic bound follows from the fact that each space $\bQ_i$ has dimension at
least two and the total dimension $n$ is the product of the dimensions of each
of the $m$ different spaces.)

\end{remark}

The partial trace condition of Item \ref{item:constraints} of
Theorem \ref{thm:char:no-signal} is quite plainly
suggested by Ref.~\cite[Theorem 6]{GutoskiW07}.
Moreover, essential components of the proofs presented for two of the three
implications claimed in Theorem \ref{thm:char:no-signal}
appear in a 2001 paper of
Beckman \emph{et al.}~\cite{BeckmanG+01}.
The following theorem, however, establishes the third implication and appears to
be new.

\begin{theorem} \label{thm:TP-constraints}

A Hermitian operator $X\in\her{\kprod{\cA}{1}{m}\otimes\kprod{\cX}{1}{m}}$ is in
$\kprod{\bQ}{1}{m}$ if and only if for each index set
$K\subseteq\set{1,\dots,m}$ there exists a Hermitian operator
$Q\in\her{\cA_{\overline{K}}\ot\cX_{\overline{K}}}$
with
\( \ptr{\cA_K}{X} = Q \otimes I_{\cX_K}. \)

\end{theorem}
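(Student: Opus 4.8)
**The plan is to prove both directions of the characterization of $\kprod{\bQ}{1}{m}$ by induction on the number of parties $m$.** The statement to be proved is that $X\in\her{\kprod{\cA}{1}{m}\otimes\kprod{\cX}{1}{m}}$ lies in $\kprod{\bQ}{1}{m}$ if and only if, for every index set $K\subseteq\set{1,\dots,m}$, there is a Hermitian $Q$ with $\ptr{\cA_K}{X}=Q\otimes I_{\cX_K}$. The key definitional fact I would lean on is that $\bQ_i$ consists exactly of those $X\in\her{\cA_i\otimes\cX_i}$ with $\ptr{\cA_i}{X}=\lambda I_{\cX_i}$ for some real $\lambda$. Thus a single-party factor is characterized by one partial-trace condition, and the content of the theorem is that separate per-party conditions, applied across all subsets, pin down the tensor product of the subspaces.

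\emph{First I would dispose of the ``only if'' direction,} which is the easy inclusion. Suppose $X\in\kprod{\bQ}{1}{m}$. By linearity it suffices to treat a product $X=\kprod{X}{1}{m}$ with each $X_i\in\bQ_i$, so $\ptr{\cA_i}{X_i}=\lambda_i I_{\cX_i}$. Fixing $K$, taking the partial trace over $\cA_K$ factors as a tensor product of the partial traces on each coordinate in $K$, each of which produces $\lambda_i I_{\cX_i}$. Collecting the coordinates outside $K$ into an operator $Q\in\her{\cA_{\overline K}\otimes\cX_{\overline K}}$ and the scalars $\prod_{i\in K}\lambda_i$ into $Q$ as well, one obtains $\ptr{\cA_K}{X}=Q\otimes I_{\cX_K}$. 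General $X$ follows by summing over a decomposition into products and using linearity of the partial trace.

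\emph{The harder direction is ``if,''} and here the plan is induction on $m$, with the base case $m=1$ being exactly the definition of $\bQ_1$. For the inductive step I would single out the last party. Applying the hypothesis with $K=\set{m}$ gives $\ptr{\cA_m}{X}=Q\otimes I_{\cX_m}$ for some $Q\in\her{\cA_{1\dots m-1}\otimes\cX_{1\dots m-1}}$; the idea is that $Q$ should turn out to lie in $\kprod{\bQ}{1}{m-1}$, which by the induction hypothesis can be verified by checking its partial-trace conditions against the original conditions on $X$. The main obstacle, and the step I expect to require real care, is reconstructing $X$ itself as an element of $\kprod{\bQ}{1}{m-1}\otimes\bQ_m$ rather than merely controlling one of its partial traces: the conditions constrain the ``average'' over $\cA_m$ but not directly the detailed $\cA_m$-dependence of $X$. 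The resolution I would pursue is to fix an orthonormal (Hermitian) basis $\set{F_1,\dots}$ of $\bQ_m$ together with a complementary basis of $\her{\cA_m\otimes\cX_m}$, expand $X$ in the induced product basis of $\her{\cA_m\otimes\cX_m}$ tensored with $\her{\cA_{1\dots m-1}\otimes\cX_{1\dots m-1}}$, and show that every coefficient attached to a basis element lying \emph{outside} $\bQ_m$ must vanish. To see this vanishing I would apply the hypothesis for index sets $K$ containing $m$: the condition $\ptr{\cA_K}{X}=Q\otimes I_{\cX_K}$, after tracing out $\cA_m$, forces the $\cX_m$-dependence of the corresponding coefficient operators to be proportional to $I_{\cX_m}$, which is precisely the defining condition for membership in $\bQ_m$. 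Once the expansion is shown to be supported on $\bQ_m$ in the last coordinate, each coefficient operator in $\her{\cA_{1\dots m-1}\otimes\cX_{1\dots m-1}}$ inherits all the partial-trace conditions for index sets $K\subseteq\set{1,\dots,m-1}$, so the induction hypothesis places each of them in $\kprod{\bQ}{1}{m-1}$, completing the argument.
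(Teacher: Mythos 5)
Your proof is correct and follows essentially the same route as the paper's: induction on the number of parties, expansion of $X$ in a product basis over one party's space, and use of dual linear functionals together with the partial-trace constraints to show that each coefficient operator lies in the appropriate subspace. The only difference is organizational---you take a basis of $\her{\cA_m\otimes\cX_m}$ adapted to $\bQ_m$ and first argue that the components outside $\bQ_m$ vanish, whereas the paper expands in an arbitrary basis, first places the first-$m$-party coefficients in $\kprod{\bQ}{1}{m}$, and then handles the last tensor factor in a symmetric second pass; the underlying mechanism is identical.
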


\begin{proof}
The ``only if'' portion of the theorem is straightforward---only the
``if'' portion is proven here.
The proof proceeds by induction on $m$.
The base case $m=1$ is trivial.
Proceeding directly to the general case,
let $s=\dim\pa{\her{\cA_{m+1}\otimes\cX_{m+1}}}$ and
let $\set{E_1,\dots,E_s}$ be a basis of
$\her{\cA_{m+1}\otimes\cX_{m+1}}$.
Let $X_1,\dots,X_s\in\her{\kprod{\cA}{1}{m}\otimes\kprod{\cX}{1}{m}}$
be the unique operators satisfying
\[ X = \sum_{j=1}^s X_j \otimes E_j. \]
It shall be proven that $X_1,\dots,X_s\in\kprod{\bQ}{1}{m}$.
The intuitive idea is to exploit the linear independence of $E_1,\dots,E_s$ in
order to ``peel off'' individual product terms in the decomposition of $X$.

Toward that end, for each fixed index $j\in\set{1,\dots,s}$ let
$H_j$ be a Hermitian operator
for which the real number $\inner{H_j}{E_i}$ is nonzero only when
$i=j$.
Define a linear functional
$\varphi_j:E\mapsto\inner{H_j}{E}$
and note that
\[
\Pa{ \idsup{\kprod{\cA}{1}{m}\otimes\kprod{\cX}{1}{m} } \ot \varphi_j}
  \pa{X} =
\sum_{i=1}^s \varphi_j\pa{E_i} X_i = \varphi_j\pa{E_j} X_j.
\]
Fix an arbitrary partition $K,\overline{K}$ of the index set $\set{1,\dots,m}$.
By assumption,
\( \ptr{\cA_K}{X} = Q\otimes I_{\cX_K} \)
for some Hermitian operator $Q$.
Apply
$\trace_{\cA_K}$
to both sides of the above identity, then use the fact that
$\trace_{\cA_K}$
and $\varphi_j$ act upon different spaces
to obtain
\begin{align*}
& \varphi_j\pa{E_j} \ptr{\cA_K}{X_j} \\
={}& \Ptr{\cA_K}{
  \Pa{\idsup{ \kprod{\cA}{1}{m}\otimes\kprod{\cX}{1}{m} } \ot \varphi_j}
  \Pa{X}
} \\
={}& \Pa{ 
  \idsup{\cA_{\overline{K}} \otimes \kprod{\cX}{1}{n}}
  \ot \varphi_j
} \Pa{ \Ptr{\cA_K}{X} } \\
={}& \Pa{
  \idsup{\cA_{\overline{K}}\otimes \cX_{\overline{K}} }
  \ot \varphi_j
} \pa{ Q }
\otimes I_{\cX_K},
\end{align*}
from which it follows that $\ptr{\cA_K}{X_j}$ is a product operator of the form
$R \otimes I_{\cX_K}$ for some Hermitian operator $R$.
As this identity holds for all index sets $K$,
it follows from the induction hypothesis that
$X_j\in\kprod{\bQ}{1}{m}$ as desired.

Now, choose a maximal linearly independent subset
$\set{X_1,\dots,X_t}$ of $\set{X_1,\dots,X_s}$ and let
$Y_1,\dots,Y_t$ be the unique Hermitian
operators satisfying
\[ X = \sum_{i=1}^t X_i \otimes Y_i. \]
A similar argument shows $Y_1,\dots,Y_t\in\bQ_{m+1}$,
which completes the induction.
\end{proof}


\begin{proof}[Proof of Theorem \ref{thm:char:no-signal}]

\begin{description}

\item[Item \ref{item:constraints} implies item \ref{item:prod-space}.]

This implication follows immediately from Theorem \ref{thm:TP-constraints}.

\item[Item \ref{item:prod-space} implies item \ref{item:no-signal}.]

The proof of this implication borrows heavily from the proof of Theorem 2 in
Beckman \emph{et al.}~\cite{BeckmanG+01}.

Fix any partition $K,\overline{K}$ of the index set $\set{1,\dots,m}$.
Let $s=\dim\pa{\lin{\cX_{\overline{K}}}}$ and $t=\dim\pa{\lin{\cX_K}}$
and let $\set{\rho_1,\dots,\rho_s}$ and $\set{\sigma_1,\dots,\sigma_t}$
be bases of $\lin{\cX_{\overline{K}}}$ and $\lin{\cX_K}$, respectively,
that consist entirely of density operators.
Given any two operators $X,Y\in\lin{\kprod{\cX}{1}{m}}$ let
$x_{j,k},y_{j,k}\in\mathbb{C}$ be the unique coefficients of $X$ and $Y$
respectively in the product basis $\set{\rho_j\otimes \sigma_k}$.
Then
\( \ptr{\cX_K}{X} = \ptr{\cX_K}{Y} \)
implies
\[ \sum_{k=1}^t x_{j,k} = \sum_{k=1}^t y_{j,k} \]
for each fixed index $j=1,\dots,s$.

As $\jam{\Lambda}\in\kprod{\bQ}{1}{m}$, it is possible to write
\[
\jam{\Lambda} = \sum_{l=1}^n \jam{\Phi_{1,l}}\ot\cdots\ot\jam{\Phi_{m,l}}
\]
where $n$ is a positive integer and
$\Phi_{i,l}:\lin{\cX_i}\to\lin{\cA_i}$ satisfies
$\jam{\Phi_{i,l}}\in\bQ_i$ for each of the indices
$i=1,\dots,m$ and $l=1,\dots,n$.
In particular, as each $\Phi_{i,l}$ is (a scalar multiple of) a trace-preserving
super-operator, it holds that for each index $l=1,\dots,n$ there exists
$a_l\in\mathbb{R}$ with
$\tr{\Phi_{K,l}\pa{\sigma}}=a_l$ for all density operators $\sigma$.
Then
\begin{align*}
  \ptr{\cA_K}{\Lambda\pa{X}}
  &= \sum_{l=1}^n a_l \cdot \cBr{\sum_{k=1}^t x_{j,k}} \cdot
    \sum_{j=1}^s  \Phi_{\overline{K},l}\pa{\rho_j} \\
  &= \sum_{l=1}^n a_l \cdot \cBr{\sum_{k=1}^t y_{j,k}} \cdot
    \sum_{j=1}^s  \Phi_{\overline{K},l}\pa{\rho_j}
  = \ptr{\cA_K}{\Lambda\pa{Y}}
\end{align*}
as desired.

\item[Item \ref{item:no-signal} implies item \ref{item:constraints}.]

This implication is essentially a multi-party generalization of Theorem 8 in
Beckman \emph{et al.}~\cite{BeckmanG+01}.
The proof presented here differs from theirs in some interesting but
non-critical details.

Fix any partition $K,\overline{K}$ of the index set $\set{1,\dots,m}$.
To begin,  observe that
\[
\ptr{\cX_K}{X} = \ptr{\cX_K}{Y}
\implies
\ptr{\cA_K}{\Lambda\pa{X}} = \ptr{\cA_K}{\Lambda\pa{Y}}
\]
for \emph{all} operators
$X,Y\in\lin{\kprod{\cX}{1}{m}}$---not just density operators.
(This observation follows from the fact that $\lin{\kprod{\cX}{1}{m}}$ is
spanned by the density operators---a fact used in the above proof that
item \ref{item:prod-space} implies item \ref{item:no-signal}.)

Now, let $s=\dim\pa{\cX_{\overline{K}}}$ and $t=\dim\pa{\cX_K}$
and let
$\set{e_1,\dots,e_s}$ and $\set{f_1,\dots,f_t}$
be the standard bases of $\cX_{\overline{K}}$ and $\cX_K$ respectively.
If $c$ and $d$ are distinct indices in $\set{1,\dots,t}$
and $Z\in\lin{\cX_{\overline{K}}}$ is any operator then
\[
  \ptr{ \cX_K }{ Z \otimes f_cf_d^* } =
  Z \otimes \tr{ f_cf_d^* } =
  0_{\cX_{\overline{K}}} = \ptr{ \cX_K }{ 0_{\kprod{\cX}{1}{m}} }
\]
and hence
\[
  \Ptr{ \cA_K }{ \Lambda\Pa{ Z \otimes f_cf_d^*   } } =
  \Ptr{ \cA_K }{ \Lambda\Pa{ 0_{\kprod{\cX}{1}{m}} } } =
  \Ptr{ \cA_K }{ 0_{\kprod{\cA}{1}{m}} } = 0_{\cA_{\overline{K}}}.
\]
(Here a natural notation for the zero operator was used implicitly.)
Similarly, if $\rho\in\lin{\cX_K}$ is any density operator then
\[
\Ptr{ \cA_K }{ \Lambda\Pa{ Z \otimes f_cf_c^* } } =
\Ptr{ \cA_K }{ \Lambda\Pa{ Z \otimes \rho } }
\]
for each fixed index $c=1,\dots,t$.
Employing these two identities, one obtains
\begin{align*}
  \Ptr{ \cA_K }{ \jam{\Lambda} }
  &= \sum_{a,b=1}^s \sum_{c=1}^t \Ptr{ \cA_K }{
    \Lambda\Pa{e_ae_b^* \otimes f_cf_c^*}} \otimes \cBr{e_ae_b^* \otimes f_cf_c^*} \\
  &= \sum_{a,b=1}^s  \Ptr{ \cA_K }{
    \Lambda\Pa{e_ae_b^* \otimes \rho}} \otimes
    e_ae_b^* \otimes \cBr{\sum_{c=1}^tf_cf_c^*}
   = \jam{\Psi} \otimes I_{ \cX_K }
\end{align*}
where the quantum operation $\Psi$ is defined by
$\Psi:X\mapsto\ptr{ \cA_K }{\Lambda\Pa{X\otimes\rho}}$.
As $\jam{\Psi} \otimes I_{ \cX_K }$ is a product operator of the desired form,
the proof that item \ref{item:no-signal} implies item
\ref{item:constraints} is complete.

\end{description}
\end{proof}

\subsection{A Separable No-Signaling Operation that is Not a LOSE Operation}
\label{sec:no-sig:counter-example}

One of the goals of the present work is to establish a ball of LOSR operations
around the completely noisy channel.
As such, it is prudent to consider the possibility of a short cut.
Toward that end, suppose $\Lambda:\lin{\cX_1\otimes\cX_2}\to\lin{\cA_1\otimes\cA_2}$
is a quantum operation for which the operator $\jam{\Lambda}$
is $\Pa{\her{\cA_1\otimes\cX_1};\her{\cA_2\otimes\cX_2}}$-separable.
Quantum operations with separable Choi-Jamio\l kowski representations such as
this are called \emph{separable operations} \cite{Rains97}.
If an operation is both separable and no-signaling then must it always be a
LOSE operation, or even a LOSR operation?
An affirmative answer to this question would open a relatively simple path to
the goal by permitting the use of existing knowledge of separable balls around
the identity operator.

Alas, such a short cut is not to be had:
there exist no-signaling operations $\Lambda$ that are not LOSE operations, yet
$\jam{\Lambda}$ is separable.
One example of such an operation is the so-called ``nonlocal box'' discovered in
1994 by Popescu and Rohrlich~\cite{PopescuR94}.
This nonlocal box is easily formalized as a two-party no-signaling quantum
operation $\Lambda$, as in Ref.~\cite[Section V.B]{BeckmanG+01}.
That formalization is reproduced here.

Let \( \cX_1 = \cX_2 = \cA_1 = \cA_2 = \mathbb{C}^2 \),
let $\set{e_0,e_1}$ denote the standard bases of both $\cX_1$ and $\cA_1$, and
let $\set{f_0,f_1}$ denote the standard bases of both $\cX_2$ and $\cA_2$.
Write
\[ \rho_{ab} \defeq e_ae_a^*\otimes f_bf_b^* \]
for $a,b\in\set{0,1}$.
The nonlocal box $\Lambda:\lin{\kprod{\cX}{1}{2}}\to\lin{\kprod{\cA}{1}{2}}$
is defined by
\begin{align*}
\Set{ \rho_{00},\rho_{01},\rho_{10} }
&\stackrel{\Lambda}{\longmapsto}
\frac{1}{2} \cBr{ \rho_{00} + \rho_{11} } \\
\rho_{11}
&\stackrel{\Lambda}{\longmapsto}
\frac{1}{2} \cBr{ \rho_{01} + \rho_{10} }.
\end{align*}
Operators not in $\spn\set{\rho_{00},\rho_{01},\rho_{10},\rho_{11}}$ are
annihilated by $\Lambda$.
It is routine to verify that $\Lambda$ is a no-signaling operation, and
this operation $\Lambda$ is known not to be a LOSE operation~\cite{PopescuR94}.
To see that $\jam{\Lambda}$ is separable, write
\begin{align*}
  E_{a\to b} &\defeq e_be_a^* \\
  F_{a\to b} &\defeq f_bf_a^*
\end{align*}
for $a,b\in\set{0,1}$.
Then for all $X\in\lin{\kprod{\cX}{1}{2}}$ it holds that
\begin{align*}
\Lambda\pa{X}
&= \frac{1}{2}
\Big[ E_{0\to 0} \otimes F_{0\to 0} \Big] X
\Big[ E_{0\to 0} \otimes F_{0\to 0} \Big]^*
+  \frac{1}{2}
\Big[ E_{0\to 1} \otimes F_{0\to 1} \Big] X
\Big[ E_{0\to 1} \otimes F_{0\to 1} \Big]^* \\
&+ \frac{1}{2}
\Big[ E_{0\to 0} \otimes F_{1\to 0} \Big] X
\Big[ E_{0\to 0} \otimes F_{1\to 0} \Big]^*
+ \frac{1}{2}
\Big[ E_{0\to 1} \otimes F_{1\to 1} \Big] X
\Big[ E_{0\to 1} \otimes F_{1\to 1} \Big]^* \\
&+ \frac{1}{2}
\Big[ E_{1\to 0} \otimes F_{0\to 0} \Big] X
\Big[ E_{1\to 0} \otimes F_{0\to 0} \Big]^*
+ \frac{1}{2}
\Big[ E_{1\to 1} \otimes F_{0\to 1} \Big] X
\Big[ E_{1\to 1} \otimes F_{0\to 1} \Big]^* \\
&+ \frac{1}{2}
\Big[ E_{1\to 0} \otimes F_{1\to 1} \Big] X
\Big[ E_{1\to 0} \otimes F_{1\to 1} \Big]^*
+ \frac{1}{2}
\Big[ E_{1\to 1} \otimes F_{1\to 0} \Big] X
\Big[ E_{1\to 1} \otimes F_{1\to 0} \Big]^*,
\end{align*}
from which the
$\Pa{\her{\cA_1\otimes\cX_1};\her{\cA_2\otimes\cX_2}}$-separability of $\jam{\Lambda}$
follows.
It is interesting to note that the nonlocal box is
the smallest possible nontrivial example of such an operation---the number of
parties $m=2$ and the input spaces
$\cX_1,\cX_2$ and output spaces $\cA_1,\cA_2$ all have dimension two.

\section{Open Problems}
\label{sec:conclusion}

Several interesting open problems are suggested by the present work:
\begin{description}

\item[Bigger ball of LOSE or LOSR operations.]

The size of the ball of (unnormalized) LOSR operations established in Theorem
\ref{thm:ball} scales as $\Omega\Pa{2^{-m} n^{-3/2}}$.
While the exponential dependence on the number of parties $m$ seems unavoidable,
might it be possible to improve the exponent from $-m$ to $-m/2$, as is the case
with separable quantum states~\cite{GurvitsB03}?
Can the exponent on the dimension $n$ be improved?

As mentioned in Remark \ref{rem:ball:same}, it is not clear that there is a
ball of LOSE operations that strictly contains any ball of LOSR operations.
Does such a larger ball exist?

\item[Completely positive super-operators.]

As noted in Remark \ref{rem:LOSE:cp},
the characterization of LOSE operations is interesting because it involves
linear functionals that are not just positive, but
``completely'' positive on the family of
$\Pa{\bQ_1\pa{\cE_1};\dots;\bQ_m\pa{\cE_m}}$-separable cones.

Apparently, the study of completely positive super-operators has until now been
strictly limited to the context of positive semidefinite input cones.
Any question that may be asked of conventional completely positive
super-operators might also be asked of this new class of completely positive
super-operators.

\item[Entanglement required for approximating LOSE operations.]

It was mentioned in the introduction that there exist LOSE operations that
cannot be implemented with any finite amount of shared entanglement
\cite{LeungT+08}.
The natural question, then, is how much entanglement is necessary to achieve an
arbitrarily close approximation to such an operation?

The present author conjectures that for every two-party LOSE operation $\Lambda$ there exists an
$\varepsilon$-approximation $\Lambda'$ of $\Lambda$
in which the dimension of the shared entangled state scales as
$O\pa{2^{\varepsilon^{-a}}n^b}$ for some positive constants $a$ and $b$
and some appropriate notion of $\varepsilon$-approximation.
Here $n=\dim\Pa{\kprod{\bQ}{1}{m}}$ is the dimension of the space in which
$\jam{\Lambda}$ lies.

Evidence in favor of this conjecture can be found in Refs.~\cite{CleveH+04,KempeR+07,LeungT+08}.
Moreover, the example in Ref.~\cite{LeungT+08} strongly suggests that the exponential dependence on $1/\varepsilon$ is unavoidable.
The pressing open question is whether the polynomial dependence on $n$ holds for the general case.
At the moment, \emph{no upper bound at all} is known for this general class of two-party LOSE operations.

\end{description}

\section*{Acknowledgements}

The author is grateful to Marco Piani and Stephanie Wehner
for pointers to relevant literature, and to
John Watrous for invaluable guidance.
This research was supported by graduate scholarships from Canada's NSERC and the
University of Waterloo.


\newcommand{\etalchar}[1]{$^{#1}$}

\end{document}